\newcommand{\comment}[1]{}
\newcommand{\Omit}[1]{}
\newtheorem{thm}{Theorem}
\newtheorem{lemma}{Lemma}
\numberwithin{lemma}{section}
\newtheorem{corollary}[lemma]{Corollary}
\newtheorem{proposition}[lemma]{Proposition}
\newtheorem{fact}[lemma]{Fact}
\newenvironment{AvoidOverfullParagraph}[0]
{\sloppy\ignorespaces}
{\par\fussy\ignorespacesafterend}
\begin{document}

\title{Max-Min Greedy Matching}

\author{
Alon Eden\thanks{Tel-Aviv University; {\tt alonarden@gmail.com}}
\and
Uriel Feige\thanks{Weizmann Institute of Science; {\tt uriel.feige@weizmann.ac.il}}
\and
Michal Feldman\thanks{Tel-Aviv University and Microsoft Research Israel; {\tt mfeldman@tau.ac.il}}
}
\date{}

\maketitle

\begin{abstract}
A bipartite graph $G(U,V;E)$ that admits a perfect matching is given.
One player imposes a permutation $\pi$ over $V$, the other player imposes a permutation $\sigma$ over $U$.
In the greedy matching algorithm, vertices of $U$ arrive in order $\sigma$ and each vertex is matched to the lowest (under $\pi$) yet unmatched neighbor in $V$ (or left unmatched, if all its neighbors are already matched).
The obtained matching is maximal, thus matches at least a half of the vertices.
The max-min greedy matching problem asks: suppose the first (max) player reveals $\pi$, and the second (min) player responds with the worst possible $\sigma$ for $\pi$, does there exist a permutation $\pi$ ensuring to match strictly more than a half of the vertices? Can such a permutation be computed in polynomial time?

The main result of this paper is an affirmative answer for this question: we show that there exists a polytime algorithm to compute $\pi$ for which for every $\sigma$ at least $\rho > 0.51$ fraction of the vertices of $V$ are matched.
We provide additional lower and upper bounds for special families of graphs, including regular and Hamiltonian.
Interestingly, even for regular graphs with arbitrarily large degree (implying a large number of disjoint perfect matchings), there is no $\pi$ ensuring to match more than a fraction $8/9$ of the vertices.

The max-min greedy matching problem solves an open problem regarding the welfare guarantees attainable by pricing in sequential markets with binary unit-demand valuations.
In addition, it has implications for the size of the unique stable matching in markets with global preferences, subject to the graph structure. 
\end{abstract}

\section{Introduction}
\label{sec:introduction}

In a typical Internet advertising scenario, advertisers arrive sequentially and are matched in an online fashion to a set of ad slots.
The domain of Internet advertising has led to a surge of interest in the study of online matching problems.
Indeed, a large number of new models of online matching problems have been introduced in the last decade and new algorithmic techniques have emerged (see the survey of Mehta \cite{Mehta13} and references therein). The new problems are both theoretically elegant and practically relevant.

In an online bipartite matching problem, there is a set $V$ of right-side vertices, known in advance, and a set $U$ of left-side vertices, which arrive in an online fashion.
Upon the arrival of a vertex $u \in U$, the weights on the edges from $u$ to vertices in $V$ are revealed, and an immediate and irrevocable matching decision should be made (possibly leaving $u$ unmatched).
The goal is to maximize the total weight of the obtained matching.
In the Internet advertising analogue, $V$ is the set of items (ad slots) and $U$ is the set of buyers (advertisers).
Upon the arrival of a buyer, her values to all items are revealed and a matching decision should be made.
The total weight of the obtained matching is termed the welfare of the matching.
The performance of an online matching algorithm ALG is typically quantified by the {\em competitive ratio} of ALG, defined as the ratio of the expected welfare obtained by ALG (in the worst arrival order) and the maximum weighted matching in hindsight.

Feldman et al. \cite{FeldmanGL15} study the design of pricing mechanisms in {\em Bayesian} online matching problems; i.e., where the valuation of every buyer is drawn independently from a known distribution.
The designer assigns item prices, based on the known prior, and buyers arrive in an adversarial order (after observing the prices), each purchasing the item that maximizes her utility (defined as the difference between value and price).
It is shown that one can set item prices that guarantee a competitive ratio of $1/2$.
In fact, the last result holds in a much more general setting than matching, where buyers have submodular\footnote{A valuation is said to be submodular if for every two sets $S,T$, $v(S)+v(T) \geq v(S\cup T) + v(S \cap T)$,.} valuations over bundles of items, and each arriving buyer purchases a bundle that maximizes her utility (defined as the difference between the value of the bundle and the sum of item prices in the bundle).

The bound of $1/2$ has been shown to be tight in the Bayesian setting \cite{FeldmanGL15}.
I.e., if the designer knows the distributions from which values are drawn but not the realized values, then no item prices can obtain better than half the optimal welfare in the worst case.
A natural question to ask is whether this ratio can be improved in {\em full information} settings; i.e., in scenarios where the designer knows the realized values of the buyers from the outset.
The full information assumption is sensible in repeated markets or in markets where the stakes are high and the designer may invest in learning the demand in the market before setting prices.
Concretely, the problem is the following: in scenarios where the graph is fully known, does there exist item prices that guarantee strictly more than half the optimal welfare, for any arrival order $\sigma$?
This question has been open until now.
Not only has it been open for general combinatorial auctions with submodular valuations, it has been open even for unit-demand buyers, and even if all individual values are in $\{0,1\}$ (henceforth referred to as $\{0,1\}$ unit demand valuations).
This scenario has the following description as an online bipartite matching problem.

\vspace{0.1in}

\noindent {\bf Max-min greedy matching.}
Let $G(U,V;E)$ be an (unweighted) bipartite graph with $n$ vertices in each side.
Assume w.l.o.g. that $G$ admits a perfect matching.
For every vertex $w \in U \cup V$, let $N(w)$ denote the set of neighbors of $w$.
Consider a permutation $\sigma$ over $U$ and a permutation $\pi$ over $V$.
In the greedy matching algorithm, vertices of $U$ arrive in order $\sigma$ and each vertex $u \in U$ is matched to the lowest (under $\pi$) yet unmatched $v \in N(u)$ (or left unmatched, if all $N(u)$ is already matched).
Let $M_G[\sigma,\pi]$ denote the greedy matching obtained by using permutations $\sigma$ and $\pi$.
Let $\rho[G] = \frac{1}{n} \max_{\pi} \min_{\sigma}[|M_G[\sigma,\pi]|]$, and let $\rho = \min_G[\rho[G]]$. 
The question is whether $\rho > 1/2$: is there some $\rho > 1/2$ such that every bipartite graph (that admits a perfect matching) admits a permutation $\pi$ satisfying $\min_{\sigma}[|M_G[\sigma,\pi]|] \ge \rho n$?

Every greedy matching is a maximal matching. Hence for {\em every} $\pi$ the obtained matching is of size at least $n/2$ (and this is true even if the graph is unknown).
Moreover, for families of graphs where every maximal matching is greater than $n/2$ (e.g., for random $d$-regular graphs, almost surely every maximal matching has size at least $n - O(\frac{n \log d}{d})$), the answer to the above question is affirmative.

\vspace{0.1in}

Besides its implication to pricing applications, we believe that the max-min greedy matching problem is interesting in its own right as a variant of an online bipartite matching.
Moreover, this problem has an interpretation also as a static matching problem, in the context of stable matching.
In what follows, we provide three different interpretations to the max-min greedy matching problem. In all three interpretations the question is whether it is possible to obtain a matching that is strictly greater than a half of the vertices.

\vspace{0.1in}
\noindent {\bf Interpretation 1: pricing / item ordering.} One can verify that in scenarios with $\{0,1\}$ unit demand valuations, the max-min greedy matching problem is exactly equivalent to the online pricing problem described above. Here, $U$ is the set of buyers, $V$ the set of items, and the designer determines item prices that induce an ordering $\pi$ over the items, followed by an arbitrary arrival order $\sigma$ of the buyers. The competitive ratio with respect to the online pricing problem equals exactly the value of $\rho$ associated with the max-min greedy matching problem.

An equivalent scenario is one where one player controls the items, the other player controls the buyers.
In each step, the items player offers an item, and the buyers player, upon seeing the item, allocates the item to one of the buyers that wants the item (if there is any), and that buyer leaves.
The items player is non-adaptive (plays blindfolded, without seeing which buyers remain\footnote{We note that when the items player is adaptive (chooses the next item based on what happened in the past), the items player can ensure a perfect matching. This is done as follows: in each step, find a minimal tight set of items, and offer an arbitrary item from that set. Here, a set of items is tight if the number of buyers that want items in the set is equal to the size of the set.}
).
The size of the matching that can be guaranteed by the items player is equivalent to the max-min greedy matching problem.

\vspace{0.1in}
\noindent {\bf Interpretation 2: buyer ordering.}
An equivalent formulation of the problem is one where the permutation $\pi$ is imposed over the buyers rather than over the items.
The buyers then arrive in the order of $\pi$, each taking an arbitrary item she wants.
One can verify that the size of the matching that can be guaranteed by an ordering over the buyers is equivalent to the max-min greedy matching problem.

\vspace{0.1in}
\noindent {\bf Interpretation 3: stable matching.}
A third motivation comes from the world of stable matching \cite{GS62}.
This interpretation is static and has no online component.
In a stable matching scenario, every vertex in $U$ has a preference order over (possibly a subset of) the vertices in $V$, and every vertex in $V$ has a preference order over (possibly a subset of) the vertices in $U$.
Given a matching $M$, a pair of vertices $u \in U, v \in V$ is said to constitute a blocking pair if $u$ and $v$ are not matched in $M$, but they both prefer each other over their partners in $M$.
A matching is said to be stable if no pair of vertices $u,v$ constitutes a blocking pair.
Given a graph $G$ and permutations $\pi$ and $\sigma$ over $V$ and $U$ respectively, a stable matching problem is induced, where $\pi$ and $\sigma$ correspond to global preference orders over $V$ and $U$, respectively, subject to the graph structure.
That is, every vertex in $U$ (resp., $V$) prefers a neighbor of lower rank according to $\pi$ (resp., according to $\sigma$).
One can easily verify that global preference orders imply the existence of a unique stable matching.
Moreover, the outcome of the greedy matching process is the unique stable matching.
Thus, our question is equivalent to asking whether there exists a permutation $\pi$ over $V$ such that for every permutation $\sigma$ over $U$ the unique stable matching with respect to $\pi$ and $\sigma$ obtains a large matching.

\vspace{0.1in}

The reader familiar with the seminal work of Karp et al. \cite{KarpVV1990} might wonder whether their tight $1-1/e$ competitive ratio applies to our problem. After all, the algorithm in \cite{KarpVV1990} is a ranking algorithm, followed by the greedy matching process.
The answer is: neither the upper nor the lower bound applies.
The lower bound does not apply since the algorithm in \cite{KarpVV1990} imposes a {\em random} permutation $\pi$, and the adversary sets the permutation $\sigma$ without observing the realization of $\pi$.
This is in contrast to our model, where $\sigma$ is set after observing the realization of $\pi$.
The upper bound does not apply either since in \cite{KarpVV1990} every vertex $u \in U$ arrives with its incident edges, which are not known in advance.
In contrast, in our model, the graph is known from the outset. It is only the order of arrival that is unknown.

\begin{figure}[h!]
\begin{center}
	\includegraphics[scale=.4]{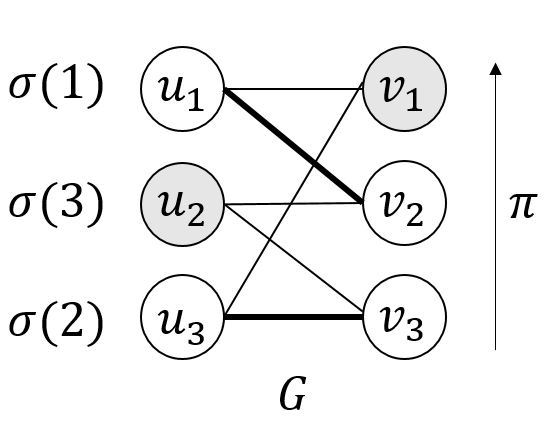}
	\caption{For every permutation $\pi$ there exists a permutation $\sigma$ that matches only 2 of the 3 vertices. Thick edges are in the matching; gray vertices are unmatched.}
\end{center}
\label{fig:two-thirds}
\end{figure}

Cohen Addad et al. \cite{Cohen-AddadEFF16} show an example of a bipartite graph $G=(V,U;E)$ with 3 vertices on each side that admits a perfect matching for which no permutation $\pi$ can guarantee to match more than 2 vertices in the worst case; that is $\rho[G] \leq 2/3$.
In their example (depicted in Figure \ref{fig:two-thirds}), $U = \{u_1,u_2,u_3\}$, $V = \{v_1,v_2,v_3\}$, $u_1$ is connected to $v_1,v_2$, $u_2$ is connected to $v_2,v_3$ and $u_3$ is connected to $v_3,v_1$. Without loss of generality, suppose $\pi=(v_3,v_2,v_1)$. For $\sigma=(u_1,u_3,u_2)$, $u_1$ is matched to $v_2$, $u_3$ is matched to $v_3$, and $u_2$ is left unmatched, resulting in a matching of size $2$.
This example shows that in general $1/2 \leq \rho \leq 2/3$.
However, it may still be the case that $\rho > 2/3$ for special classes of graphs.

\subsection{Our Results and Open Problems}

Our main result resolves the open problem in the affirmative:

\vspace{0.1in}
\noindent {\bf Theorem 1 [main theorem]:}
It holds that $\rho \geq \frac{1}{2} + \frac{1}{86} > 0.51$.
\vspace{0.1in}

The significance of this result is that $1/2$ is not the optimal answer.
We believe that further improvements are possible. In fact, for Hamiltonian graphs the proof of Theorem~1 implies that  $\rho \ge \frac{5}{9} - \frac{1}{n}$.

The proof method is quite involved; it is natural to ask whether simpler approaches may work.
A first attempt would be to check whether a random permutation $\pi$ obtains the desired result (in expectation).
The performance of a random permutation is interesting for an additional reason: it is the performance in scenarios where the graph structure is unknown to the designer.
Unfortunately, there exists a bipartite graph $G$, even one where all vertices have high degree, for which a random permutation matches no more than a fraction $1/2 + o(1)$ of the vertices (see Section ~\ref{sec:random}).

In contrast, we show that in the case of hamiltonian graphs a random permutation guarantees a competitive ratio strictly greater than $1/2$ (Section~\ref{sec:random}).
A similar proof approach applies to regular graphs as well.

\vspace{0.1in}
\noindent {\bf Theorem 2 [random permutation]:}
There is some $\rho_0 > \frac{1}{2}$ such that for every Hamiltonian graph $G$, regardless of $n$,
a random permutation $\pi$ results in $\rho \geq \rho_0$.
Similarly, there is some constant $\rho_0 > \frac{1}{2}$ such that for every $d$-regular graph $G$, regardless of $d,n$, a random permutation $\pi$ results in $\rho \geq \rho_0$.
\vspace{0.1in}

A second attempt would be to iteratively ``upgrade" unmatched vertices, with the hope that the iterative process will reach a state where many vertices will be matched.
That is, in every iteration consider the worst order $\sigma$ for the current permutation $\pi$ and move all unmatched vertices (in the matching induced by $(\pi,\sigma)$) to be ranked lower in $\pi$.
In Appendix~\ref{sec:iterative} we show that this process can go on for $\log n$ iterations before reaching a permutation that matches more than a half of the vertices.
This fact gives some indication that establishing a proof using this operator might be difficult.

Finally, we show that even for regular graphs with an arbitrarily large degree (in particular, graphs with a large number of disjoint perfect matchings), one cannot hope to get a better fraction than $8/9$.

\vspace{0.1in}
\noindent {\bf Theorem 3 [regular graphs]:}
For $d$-regulars bipartite graphs, $\rho \ge \frac{5}{9} - O(\frac{1}{\sqrt{d}})$.
On the other hand, for every integer $d \ge 1$, there is a regular graph $G_{d}$ of even degree $2d$ such that $\rho(G_{d}) \le \frac{8}{9}$.
\vspace{0.1in}


\vspace{0.1in}
\noindent {\bf Main open problem.}
Our analysis and results leave some interesting open problems for future research.
We show that $\rho$ is greater than half; finding a tight bound remains open. This is mostly interesting for general graphs, but also for special classes of graphs, such as regular and Hamiltonian ones.

\subsection{Related Work}


Since the seminal paper of Karp et al. \cite{KarpVV1990} there has been a long line of research on different variants of online bipartite matching.
The problem has gained a lot of interest in the economics and computation community due to its applicability to matching problems and Internet advertising problems (see the survey in \cite{Mehta13} and references therein).
The analysis of online bipartite matching is divided along various axes: the graph structure (e.g., regular vs. non-regular graphs), whether the vertices arrive in an adversarial or random order, and whether the algorithm is deterministic or random.
For adversarial arrival order, the deterministic greedy algorithm gives a competitive ratio of at least $1/2$, and this is tight for deterministic algorithms. For random algorithms, Karp et al. \cite{KarpVV1990} give a tight $1-1/e$ bound using random ranking.
For $d$-regular graphs, Cohen and Wajc \cite{CohenW18} have recently presented a random algorithm that obtains $1-O(\sqrt{\log d}/\sqrt{d})$ in expectation, and a lower bound of $1-O(1/\sqrt{d})$.
Under random arrival order, the deterministic greedy algorithm gives $1-1/e$, and no deterministic algorithm can obtain more than $3/4$ \cite{Goel2008}. The lower bound is obtained by a reduction from the model in \cite{KarpVV1990}.
Random ranking obtains at least $0.696$ of the optimal matching \cite{Mahdian2011} and at most $0.727$ \cite{Karande2011}.
More generally, no random algorithm can obtain more than $0.823$ \cite{Manshadi2012}.
In all of these results, the graph is unknown upfront.

Our work is also related to the recent body of literature on simple, approximately optimal truthful mechanisms \cite{Hartline2009}.
Motivated by the fact that in real-life situations one is often willing to trade optimality for simplicity, the study of simple mechanisms has gained a lot of interest in the literature on algorithmic mechanism design.
One of the simplest forms of mechanisms is that of posted price mechanisms, where prices are associated with items and agents buy their most preferred bundles as they arrive.
Pricing mechanisms have many advantages: they are simple, straightforward, allow for asynchronous arrival and departure of buyers, and are truthful in a strong sense (termed {\em obviously} strategyproof \cite{Li2017}).
Various forms of posted price mechanisms for welfare maximization have been proposed for various combinatorial settings \cite{FeldmanGL15,DuttingFKL16,Lucier17,EzraFRS17}. These mechanisms are divided along several axes, such as item vs. bundle pricing, static vs. dynamic pricing, and anonymous vs. personalized pricing.
For any market with submodular valuations, one can obtain $1/2$ of the optimal welfare by static, anonymous item prices \cite{FeldmanGL15}.
As mentioned in the introduction, until the current paper, no better results than $1/2$ were known even for markets with unit-demand valuations with $\{0,1\}$ individual values.
For a market with $m$ identical items, there exists a pricing scheme that obtains at least $5/7-1/m$ of the optimal welfare for submodular valuations \cite{EzraFRS17}.

\section{Proof of Main Result}

Given a graph $G(U,V;E)$ with a perfect matching $M$ (in which $u_i \in U$ is matched with $v_i \in V$ for every $1 \le i \le n$), we consider the following directed graph $H(W,D)$.  $W$ contains $n$ vertices, where vertex $w_i$ corresponds to the edge $(u_i,v_i)$ in $G$. Directed edge (arc) $(w_i,w_j)$ is in $D$ if $(u_i,v_j) \in E$. We refer to $H(W,D)$ as the {\em spoiling graph} for $G$, because arc $(w_i,w_j) \in D$ allows for the possibility that edge $(u_i,v_j)$ is chosen into a matching $M'$ in $G$, spoiling for $v_i$ the possibility of being matched to $u_i$ (which is the matching edge used by $M$)\footnote{A similar graph is considered also in \cite{Cohen-AddadEFF16} and \cite{HsuMRRV16}.}. Given an arc $(w_i,w_j)$, we may refer to $v_j$ as a {\em spoiling vertex} for $v_i$.

A {\em path cover} of $H$ is a collection of vertex disjoint directed paths that covers all vertices in $W$. The length (number of vertices) of a path $P$ is denoted by $|P|$. We allow a path cover to contain isolated vertices (paths of length~1), and hence $H$ necessarily has a path cover. Given a path cover, we consider the following operations:

\begin{enumerate}

\item {\em Path merging}: if $H$ has an arc from the end of one path to the start of another path, the two paths can be merged into one longer path.

\item {\em Path unbalancing}: consider any two paths $P_1$ and $P_2$ with $|P_1| \ge |P_2|$. If $H$ contains an arc from the the first vertex (say $w$) of $P_2$ to the first vertex of $P_1$, we may remove $w$ from $P_2$ and append it at the beginning of $P_1$. Likewise, if $H$ contains an arc from the last vertex of $P_1$ to the last vertex (say $w$) of $P_2$, we may remove $w$ from $P_2$ and append it at the end of $P_1$.

\item {\em Rotation}: if there is a path $P$ (say, $w_1, \ldots, w_{\ell}$) such that in $H$ there is an arc $(w_{\ell},w_1)$ from the end of the path to its beginning, then we may add this arc to $P$ (obtaining a cycle), and then remove any single arc from the resulting cycle to get a path $P'$. Observe that $P'$ and $P$ have the same vertex set, but they differ in their starting vertex along the cycle $w_1, \ldots, w_{\ell}, w_1$.

\end{enumerate}

A path cover is {\em maximal} if no path merging operation and no path unbalancing can be applied to it, not even after performing rotation operations. Observe that if a path cover is not maximal, then at most two rotation operations are required in order to be able to perform either a path merging or a path unbalancing operation. Consequently, one can check in polynomial time whether a path cover is maximal.

Every directed graph has path covers that are maximal, because the path merging and path unbalancing operations increase the sum of squares of the lengths of paths. Moreover, a maximal path cover can be found in polynomial time, starting with the trivial path cover in which all vertices of $W$ are isolated, and performing arbitrary path merging and path unbalancing operations (some of which are preceded by either one or two rotations) until no longer possible.

Given a maximal path cover $(P_1, P_2, \ldots, P_p)$ of $H$ (where $p$ denotes the number of paths in the path cover), suppose that paths are sorted in order of increasing lengths, breaking ties arbitrarily. Hence $1 \le |P_1| \le |P_2| \le \ldots \le |P_p|$. We consider the following classes of vertices of $W$:

\begin{enumerate}

\item {\em Isolated vertices} $Q$. These are the vertices that belong to paths of length~1. Let $k$ denote the number of isolated vertices, and let us name them as $q_1, \dots, q_k$. Observe that $|P_k| = 1$ and $|P_{k+1}| > 1$.

\item {\em Start vertices} $S$. These are the starting vertices of those paths that have length larger than~1. The start vertex of path $j$, for $k < j \le p$, is denoted by $s_j$.

\item {\em End vertices} $T$. These are the end vertices of those paths that have length larger than~1. The end vertex of path $j$, for $k < j \le p$, is denoted by $t_j$.

\end{enumerate}

\begin{lemma}
\label{lem:sort1}
If $H$ has a maximal path cover with $p$ paths and $k$ isolated vertices, then $\rho(G) \ge \frac{2p - k}{n}$.
\end{lemma}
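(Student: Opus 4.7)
The plan is to construct, from the maximal path cover, a specific permutation $\pi$ on $V$ and to argue that under any adversarial $\sigma$ the greedy matching has size at least $2p - k$. Writing $2p - k = k + 2(p - k)$ suggests designating $2p - k$ specific $V$-vertices --- the $V$-partner $v_{q_i}$ of each isolated $q_i$, and the $V$-partners of the start $s_j$ and end $t_j$ of each non-isolated path $P_j$ --- and arguing that all of them end up matched. I would build $\pi$ by placing these designated vertices near the top, in blocks: first the $v_{q_i}$'s (in arbitrary order), then the $v_{s_j}$'s and $v_{t_j}$'s in some order chosen to respect the rotation structure (for instance, processing non-isolated paths in decreasing order of length), and finally the remaining internal $V$-vertices. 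Placing designated vertices early gives them priority in the greedy process.

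The core of the argument is to show every designated vertex is matched, by contradiction through the maximality conditions. For an isolated $v_{q_i}$: maximality forbids arcs between any two isolated $W$-vertices (an arc $(q_i,q_l)$ would enable unbalancing that merges the two isolated paths into one, strictly increasing the sum of squares of path lengths) as well as arcs from the last vertex of any path into $q_i$ (which would enable a merging operation). Consequently, $u_{q_i}$'s $G$-neighbors other than $v_{q_i}$ all lie outside the first block of $\pi$, so when $u_{q_i}$ arrives either $v_{q_i}$ is still available --- in which case $u_{q_i}$ matches $v_{q_i}$ --- or $v_{q_i}$ has already been matched by an earlier arrival. In either case $v_{q_i}$ is matched. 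For a start $v_{s_j}$ or end $v_{t_j}$ of a non-isolated path: assume for contradiction that it is unmatched at the end of greedy, then trace the matches chosen by $u_{s_j}$ (resp.\ $u_{t_j}$) and by every other $U$-neighbor of $v_{s_j}$ (resp.\ $v_{t_j}$). Each such diversion corresponds to an arc in $H$; collectively, these arcs --- after applying suitable rotations to one or more paths --- enable some merging or unbalancing operation, contradicting maximality.

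The main obstacle is the case analysis for the starts and ends of non-isolated paths. Unlike the isolated case, ``start'' and ``end'' are rotation-dependent labels, so the argument must be consistent across rotations: one either has to fix a particularly convenient orientation of each path a priori, or invoke maximality after rotations have been applied on the fly. Moreover, the cascade of diversions can chain through several paths (an $u_{s_j}$ matches a $V$-vertex from another path, whose own start/end may then be at risk), and one has to show this cascade must eventually yield a forbidden arc-configuration. Controlling the chase probably requires a combinatorial argument that leverages the same sum-of-squares-of-path-lengths quantity that guarantees maximal path covers exist in the first place, or an ordering of paths by length so that the diversions can only push ``down'' and cannot cycle back.
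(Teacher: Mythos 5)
Your high-level plan matches the paper's: put a designated $2p-k$ prefix of $V$ at the top of $\pi$ (one $V$-vertex per isolated path, and two per longer path) and show every prefix vertex gets matched under any $\sigma$. But the specific ordering you propose, and the unresolved ``obstacle'' you flag, are where the paper does something cleaner and your version has a real gap.

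First, the order of the blocks matters, and yours is not correct. You place the isolated $q_i$'s \emph{before} the starts $s_j$. But maximality does \emph{not} forbid an arc $s_j \to q_i$ in $H$: such an arc would be usable by a merge only if it went out of $t_j$, and by an unbalancing only if it went out of $q_i$ or into $q_i$ from $t_j$; rotations do not change this. So there can be an arc from a later prefix vertex ($s_j$) to an earlier one ($q_i$), i.e., $v_{q_i}$ can spoil $v_{s_j}$, and nothing in your argument recovers $v_{s_j}$. The paper instead orders the prefix as $s_p,\dots,s_{k+1},q_1,\dots,q_k,t_{k+1},\dots,t_p$ --- starts sorted by \emph{decreasing} path length, then isolated vertices, then ends sorted by \emph{increasing} path length. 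Under this order, the maximality conditions (no $T\to Q\cup S$ arc except $(t_j,s_j)$, no $Q\to Q$ or $Q\to S$ arc, no $s_i\to s_j$ with $i<j$, no $t_j\to t_i$ with $j>i$) imply the \emph{only} backward arcs are of the form $(t_j,s_j)$. Placing $S$ before $Q$ is not cosmetic; it is what kills the $S\to Q$ problem that your order leaves open.

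Second, the part you explicitly leave as an obstacle (``cascade of diversions can chain through several paths'') is exactly the content of the lemma, and the paper's resolution is local, not a chase. For the lone remaining backward arc $(t_j,s_j)$: even if $\sigma$ causes $u_{t_j}$ to take $v_{s_j}$, the $V$-vertex $v_{t_j}$ is also adjacent (via the path edge) to $u_{\ell-1}$, the $U$-partner of the \emph{second-to-last} vertex of $P_j$. Rotating $P_j$ (allowed because the arc $(t_j,s_j)$ exists) makes $w_{\ell-1}$ an end vertex, and maximality applied to this rotated cover shows $w_{\ell-1}$ has no arcs to $s_p,\dots,s_{k+1},q_1,\dots,q_k,t_{k+1},\dots,t_{j-1}$; the one possible exception is an arc to $s_j=w_1$, but $v_1$ is already taken by $u_{t_j}$ in the bad case under consideration. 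So $u_{\ell-1}$ cannot be stolen by an earlier prefix vertex, and $v_{t_j}$ still gets matched. There is no cascade; the argument terminates after one step. Without that observation --- and without the correct prefix order that confines all trouble to $(t_j,s_j)$ arcs --- the proof is incomplete.
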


\begin{proof}
Maximality of the path cover implies the following regarding arcs in $H$:

\begin{enumerate}

\item There is no arc from $T$ to $Q \cup S$, except for arcs of the form $(t_j,s_j)$.

\item There is no arc from $t_j$ to $t_i$ if $j > i$.

\item There is no arc from $Q$ to $S$.

\item There is no arc from $q_i$ to $q_j$ for $i \not= j$.

\item There is no arc from $s_i$ to $s_j$ if $i < j$.

\end{enumerate}

Consider the following order on the vertices of $Q \cup S \cup T \in W$:
$$s_p, \ldots, s_{k+1}, q_1, \ldots q_k, t_{k+1}, \ldots, t_p.$$

In the above order, the only arcs of $H$ that go from a later vertex to an earlier vertex are of the form $(t_j,s_j)$ (for a path $P_j$ that can undergo a rotation). Consider in the graph $G$ a permutation $\pi$ on $V$ that starts with the vertices in $V$ in their order of appearance in the above sorted order. We claim that regardless of $\sigma$, all the prefix will be matched. As the length of this prefix is $2p - k$, this will prove the lemma.

It remains to prove the claim. Suppose first that in the above order there are no arcs of $H$ that go from a later vertex to an earlier vertex. Then earlier vertices
in this prefix cannot be spoiling vertices for later vertices. Hence indeed,
regardless of $\sigma$, all the prefix will be matched.

Suppose now that in the above order there are arcs of $H$ that go from a later vertex to an earlier vertex. As noted above, such an arc would be of the form $(t_j,s_j)$. We need to show that even if $s_j$ acts as a spoiling vertex for $t_j$ under $\pi$ and $\sigma$, the vertex of $V$ corresponding to $t_j$ (recall that $t_j$ is a vertex of $W$) will still be matched. Consider the path $P_j$, and let us rename its vertices as $w_1, \ldots, w_{\ell}$ (where previously we used $s_j = w_1$ and $t_j = w_{\ell}$). Recalling that $w_i = (u_i,v_i)$, we wish to show the $v_{\ell}$ would be matched even if $u_{\ell}$ is matched to $v_1$. The path $P_j$ implies that $u_{\ell - 1}$ is a neighbor of $v_{\ell}$ in $G$. Hence $v_{\ell}$ will be matched if no vertex preceding $v_{\ell}$ in $\pi$ is matched to $u_{\ell - 1}$. As $P_j$ could be rotated to $w_{\ell}, w_1, \ldots, w_{\ell-1}$ and the path cover was maximal, we get that there is no arc in $H$ from $w_{\ell - 1}$ to any of the vertices $s_p, \ldots, s_{k+1}, q_1, \ldots q_k, t_{k+1}, \ldots, t_{j-1}$ (except that there might be an arc from $w_{\ell-1}$ to $s_j = w_1 = (u_1,v_1)$, but $v_1$ was assumed to be matched to $u_{\ell}$ and hence cannot be matched to $u_{\ell - 1}$), hence none of the corresponding vertices in $V$ can be matched to $u_{\ell - 1}$.
\end{proof}

\begin{lemma}
\label{lem:sort2}
If $H$ has a maximal path cover with $p$ paths, then $\rho(G) \ge \frac{5}{9} - \frac{p}{9n}$.
\end{lemma}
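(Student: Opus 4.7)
The plan is to strengthen the prefix construction of Lemma~\ref{lem:sort1} by extracting additional matched $v$-vertices from long paths of the maximal path cover. Rewriting the target as $\rho(G)\, n \ge \frac{5n-p}{9} = \sum_{j=1}^{p} \frac{5|P_j|-1}{9}$, it suffices to produce a permutation $\pi$ that, for every $\sigma$, guarantees at least $\frac{5|P_j|-1}{9}$ matched $v$-vertices from each path $P_j$. For short paths ($|P_j|\le 3$) the two endpoint vertices provided by Lemma~\ref{lem:sort1} (or the single isolated vertex, when $|P_j|=1$) already exceed this per-path target, so the real work lies in the long paths with $\ell_j \ge 4$.

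For a long path $P_j = w_1, \ldots, w_{\ell_j}$, I would augment the endpoint prefix $\{v_1, v_{\ell_j}\}$ with \emph{interior anchor} vertices $v_i$ spaced regularly along the path, and interleave such anchors across paths when building $\pi$. The backup-neighbor argument from Lemma~\ref{lem:sort1} generalizes to each interior anchor $v_i$: the path arc $w_{i-1}\to w_i$ makes $u_{i-1}$ a second neighbor of $v_i$ in $G$, and maximality of the path cover---combined with an appropriate rotation of $P_j$ that moves $w_{i-1}$ to an end of the rotated path---forbids arcs from $w_{i-1}$ to vertices placed earlier in $\pi$, exactly as in the final step of the proof of Lemma~\ref{lem:sort1}. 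Placing the chosen anchors at the front of $\pi$ in an order that puts ``start-like'' vertices first, then the isolated vertices, then ``end-like'' vertices (in analogy with the ordering $s_p,\ldots,s_{k+1},q_1,\ldots,q_k,t_{k+1},\ldots,t_p$ of Lemma~\ref{lem:sort1}) should preserve the property that the only backward arcs in $H$ within the prefix are the ``safe'' ones handled by the rotation argument.

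The main obstacle I anticipate is making several such backup arguments coexist. A single interior anchor uses a single rotation of $P_j$, but if multiple interior anchors on the same path each invoke a rotation, one must verify that the resulting maximality constraints are jointly consistent and that no anchor is spoiled by the backup edge of another. Arcs of $H$ crossing between distinct paths add a further layer of complication, since the interleaving of anchors in $\pi$ must also rule out cross-path spoilers. I expect the ratio $5/9$ to emerge as the optimal anchor density (with per-path corrections of order $1/9$, which account for the $-p/(9n)$ term) at which these constraints coexist; pinning down this spacing and carrying out the associated case analysis against the maximality conditions established in the proof of Lemma~\ref{lem:sort1} is the heart of the argument.
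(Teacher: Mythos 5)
Your plan and the paper's proof diverge in a fundamental way, and the obstacle you flag yourself is in fact fatal to your route. You want to place interior anchor vertices $v_i$ along each long path into the prefix of $\pi$ and then re-run the rotation/maximality argument of Lemma~\ref{lem:sort1} to show each such $v_i$ has a backup neighbor $u_{i-1}$ with no spoiler earlier in $\pi$. But the rotation argument in Lemma~\ref{lem:sort1} crucially exploits that $w_{\ell-1}$ can be made an \emph{endpoint} of a rotated path, and maximality then forbids arcs from it to other start/end/isolated vertices. A rotation can only move one vertex to each end of a cycle at a time; it cannot simultaneously make several interior vertices of the same path into endpoints. So the maximality conditions simply do not rule out spoiling arcs from $w_{i-1}$ (a generic interior vertex) to earlier prefix entries, and you have no mechanism to prevent a chosen anchor on one path from being spoiled by designated backups of anchors on other paths. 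Your per-path target $\sum_j(5|P_j|-1)/9$ with ``regularly spaced'' anchors whose safety is certified vertex-by-vertex cannot go through as stated.

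The paper avoids this entirely by \emph{not} trying to guarantee that any interior vertices are matched. It drops the first vertex of each path, alternates the rest into odd ($V_1$) and even ($V_2$) classes, and places $\pi$ as: start vertices (the safe prefix from Lemma~\ref{lem:sort1}), then $V_1$, then $V_2$. Each vertex in $V_1$ gets two \emph{designated} neighbors in $U$ (its $M$-partner and the $U$-vertex of the incoming path arc), distinct across $V_1$ by parity; similarly for $V_2$. Then a pure counting argument bounds the number $|V_1|-n_1$ of unmatched vertices in $V_1$ by observing each such vertex must have both its designated neighbors absorbed by the $p+n_1$ matched vertices preceding it, yielding $2(|V_1|-n_1)\le p+n_1$ and likewise $2(|V_2|-n_2)\le p+n_1+n_2$. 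A fixed linear combination of these (using $|V_1|+|V_2|=n-p$) gives $p+n_1+n_2\ge \frac{5n-p}{9}$. The $5/9$ arises from this linear algebra, not from any anchor-density optimization. If you want to proceed, the key insight to import is: abandon the goal of certifying individual vertices matched via maximality, and instead charge unmatched vertices against a budget of designated neighbors that must be consumed.
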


\begin{proof}

In every path $P$ in the path cover, we can partition the set of its vertices into {\em odd} and {\em even} in an alternating fashion (walking along the path vertices alternate as being odd or even).
Dropping the first vertex from each path, the remaining vertices can be partitioned into two sets, $W_1$ and $W_2$, where $W_1$ is the set of odd vertices and $W_2$ is the set of even vertices. 
Similarly in $G$, we define the odd set $V_1$ by $v_i \in V_1$ iff $w_i \in W_1$, and the even set $V_2$ by $v_i \in V_2$ iff $w_i \in W_2$.

With every vertex $v_j \in V_1$ we associate two {\em designated} neighbors in $U$, not shared by any other vertex in $V_1$. One of its designated neighbors is $u_j$. The other is determined from the path that contains $w_j$. Let the incoming arc into $w_j$ in this path be $(w_i,w_j)$. Then the other designated neighbor of $v_j$ is $u_i$. Note that $v_i \not\in V_1$, and hence the designated neighbors of $v_j$ are distinct from the designated neighbors of any other vertex in $V_1$. In an analogous fashion, we associate two designated neighbors with each vertex in $V_2$, distinct from designated neighbors of other vertices in $V_2$.

Observe that $|V_1| + |V_2| = n - p$. We may assume without loss of generality that $|V_1|\leq|V_2|$ and $|V_2|-|V_1| \le 1$, because in each path the number of even vertices differs from the number of odd vertices by at most one, and when creating $V_1$ and $V_2$ path by path, we can add the larger set of vertices from the path to the smaller of the sets $V_1$ or $V_2$.
Hence it holds that $|V_1| = \lfloor \frac{n-p}{2} \rfloor$ and $|V_2| = \lceil \frac{n-p}{2} \rceil$.

Now we construct a permutation $\pi$ over $V$. The prefix of $\pi$ is composed of start vertices of the paths (the sets $S$ and $Q$), and is a prefix of the prefix given in Lemma~\ref{lem:sort1}. Namely, it is $s_p, \ldots, s_{k+1}, q_1, \ldots q_k$. Thereafter $\pi$ continues with $V_1$ followed by $V_2$.

Regardless of $\sigma$, all $p$ vertices of $S$ and $Q$ are matched, as in Lemma~\ref{lem:sort1}. For a given $\sigma$, let $n_1$ be the number of vertices matched in $V_1$ and let  $n_2$ be the number of vertices matched in $V_2$.
Then, $\lfloor \frac{n-p}{2} \rfloor - n_1$, the number of unmatched vertices  in $V_1$, satisfies $2(\lfloor \frac{n-p}{2} \rfloor - n_1) \le p + n_1$, because each unmatched vertex in $V_1$ has two distinct designated neighbors in $U$ that need to be matched to earlier vertices in $S \cup Q \cup V_1$.
Likewise,  $\lceil \frac{n-p}{2} \rceil - n_2$, the number of unmatched vertices  in $V_2$, satisfies $2(\lceil \frac{n-p}{2} \rceil - n_2) \le p + n_1 + n_2$.
Adding two times the first inequality and three times the second we get that $5p + 5n_1 + 3n_2 \ge 5n - 5p - 4n_1 - 6n_2$, implying that $p + n_1 + n_2 \ge \frac{5n}{9} - \frac{p}{9}$, as desired.
\end{proof}

Note that Lemma~\ref{lem:sort2} directly implies that $\rho(G) \geq \frac{5}{9} - \frac{1}{9n}$ for every Hamiltonian graph.



The following proposition applies to any prefix under $\pi$.

\begin{proposition}[Prefix matching]
\label{prop:prefix}
Fix a perfect matching $M$. For every integer $k$, let $V_{[k]}$ denote the first $k$ vertices under $\pi$, and let $U_{[k]}$ denote their partners under $M$.
For every $\sigma, \pi$ and every $k$, suppose $\ell$ vertices in $V_{[k]}$ are matched in $M[\sigma,\pi]$ to vertices in $U \setminus U_{[k]}$.
Then, at least $\ell + (k-\ell)/2$ vertices in $V_k$ are matched in $M[\sigma,\pi]$.
\end{proposition}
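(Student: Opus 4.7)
The plan is to reformulate the inequality into a cleaner form and then exhibit an injection. Let $m$ be the number of vertices of $V_{[k]}$ matched in $M[\sigma,\pi]$, let $u = k - m$ be the number of unmatched vertices in $V_{[k]}$, and let $m_{\text{int}} = m - \ell$ count the ``internal'' matches, i.e., the matches within $V_{[k]}\times U_{[k]}$. A simple rearrangement shows that the desired bound $m \ge \ell + (k-\ell)/2$ is equivalent to $m_{\text{int}} \ge u$. So the real task is to show that the number of internal matches is at least the number of unmatched prefix vertices.

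To prove this, I would build an injection $\Phi$ from the set of unmatched vertices in $V_{[k]}$ into the set of internal matches. For each unmatched $v \in V_{[k]}$, consider its $M$-partner $u_v \in U_{[k]}$. First I would argue that $u_v$ must be matched in $M[\sigma,\pi]$: when $u_v$ is processed under $\sigma$, the vertex $v$ is still unmatched (since $v$ is unmatched throughout the entire run), so $u_v$ has at least one available neighbor and is therefore matched by the greedy rule. Call its partner $v'$. By the greedy tie-breaking, $v'$ is the lowest $\pi$-ranked unmatched neighbor of $u_v$ at that moment, so in particular $\pi(v') \le \pi(v)$; but $v' \ne v$ (otherwise $v$ would be matched), hence $\pi(v') < \pi(v)$. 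Since $\pi(v) \le k$, this forces $v' \in V_{[k]}$, so $(u_v, v')$ is indeed an internal match. Define $\Phi(v) := (u_v, v')$.

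The map $\Phi$ is injective because different unmatched vertices $v_1 \ne v_2$ yield different partners $u_{v_1} \ne u_{v_2}$ on the $U$ side, and a match is determined by its $U$-endpoint. Hence $m_{\text{int}} \ge u$, which gives $m_{\text{int}} \ge k - m_{\text{int}} - \ell$, i.e., $m_{\text{int}} \ge (k-\ell)/2$, and therefore $m = m_{\text{int}} + \ell \ge \ell + (k-\ell)/2$, as required.

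The only subtle step is verifying that $u_v$ is matched and that its partner $v'$ lies in $V_{[k]}$; once that is in place, the injection and the arithmetic are routine. There is no real obstacle beyond being careful that ``$v$ is still unmatched when $u_v$ arrives'' follows from $v$ being unmatched in the final greedy matching, and that the greedy rule forces the chosen $v'$ to have strictly smaller $\pi$-rank than $v$, hence to lie in the same prefix.
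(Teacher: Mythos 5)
Your proof is correct and takes essentially the same route as the paper's: the paper also maps each unmatched prefix vertex $v$ to the greedy partner $v'$ of its $M$-partner $u_v$, observes that $\pi(v')<\pi(v)\le k$ forces $v'$ to be a matched vertex of the prefix (and, since $u_v\in U_{[k]}$, an ``internal'' one, i.e.\ not in $L$), and then counts. Your write-up is just a more explicit version of the paper's two-sentence argument, with the reduction to $m_{\text{int}}\ge u$ and the injectivity of $\Phi$ spelled out.
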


\begin{proof}
Let $L$ denote the set of $\ell$ vertices in $V_{[k]}$ matched to vertices in $U \setminus U_{[k]}$.
For every unmatched $v$ in $V_{[k]} \setminus L$, $v$'s partner has to be matched to a unique vertex $v' \in V_{[k]} \setminus L$ such that $\pi[v']<\pi[v]\leq k$.
It follows that the number of unmatched vertices in $V_k \setminus L$ is at most the number of matched vertices in this set, concluding the proof.
\end{proof}

Let $W_1$ denote the vertices in $H$ that belong to paths of length~1 in the path cover, and let $W_2$ denote the remaining vertices (belonging to longer paths). Likewise, let $V_1$ and $V_2$ denote the corresponding sets of vertices in $V$, and let $U_1$ and $U_2$ denote the corresponding sets of vertices in $U$.

Considering only the arcs leading from $W_1$ to $W_2$, let $M_{12}$ denote the maximum matching among these arcs.

\begin{lemma}
\label{lem:M12}
For $G$ as above, $\rho(G)n \ge |W_1|+\frac{|W_2|}{2} - \frac{|M_{12}|}{2}$.
\end{lemma}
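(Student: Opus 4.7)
My plan is to take the permutation $\pi$ that lists $V_2$ first (in an arbitrary order within $V_2$) followed by $V_1$, and then to bound the number of unmatched vertices in $V_2$ and in $V_1$ separately using Proposition~\ref{prop:prefix} together with property~(4) from the proof of Lemma~\ref{lem:sort1}. Fix any $\sigma$ and let $\ell$ denote the number of $V_2$-vertices that are matched to $U_1$ in $M[\sigma,\pi]$. The first step is to observe that $\ell \le |M_{12}|$: each such match is an edge $(u_i,v_j)\in E$ with $w_i\in W_1$ and $w_j\in W_2$, corresponding to an arc of $H$ from $W_1$ to $W_2$, and since distinct matched edges are vertex-disjoint, these $\ell$ arcs form a matching among the $W_1 \to W_2$ arcs.

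I would next apply Proposition~\ref{prop:prefix} with $V_{[k]} = V_2$ and $k = |W_2|$; then $U_{[k]} = U_2$ and $\ell$ is exactly the number of vertices in $V_{[k]}$ matched to $U \setminus U_{[k]}$. The proposition then yields that the number of matched vertices of $V_2$ is at least $\ell + (|W_2| - \ell)/2 = (|W_2| + \ell)/2$.

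It remains to bound the number of unmatched vertices of $V_1$. For any unmatched $v_i \in V_1$, maximality of the greedy matching forces $u_i$ to be matched to some $v' \ne v_i$ with $\pi(v') < \pi(v_i)$. Since $V_2$ precedes $V_1$ in $\pi$, and property~(4) in the proof of Lemma~\ref{lem:sort1} (which uses maximality of the path cover to rule out any arc between distinct isolated vertices of $H$) implies that $u_i$ has no $V_1$-neighbor besides $v_i$, we must have $v' \in V_2$. The map $v_i \mapsto u_i$ therefore injects the unmatched vertices of $V_1$ into the set of $U_1$-vertices matched to $V_2$, so the number of unmatched $V_1$-vertices is at most $\ell$. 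Summing the two bounds, the total matching size is at least $(|W_1| - \ell) + (|W_2| + \ell)/2 = |W_1| + |W_2|/2 - \ell/2 \ge |W_1| + |W_2|/2 - |M_{12}|/2$, as required. The main conceptual point — and the step I would most want to double-check — is that putting $V_2$ \emph{before} $V_1$ (rather than the other way around) is exactly what lets property~(4) charge both the $V_1$-losses and the $V_2$-to-$U_1$ diversions to the same quantity $\ell$, which is then bounded by $|M_{12}|$.
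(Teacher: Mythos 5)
Your proof is correct and follows essentially the same route as the paper's: same permutation ($V_2$ before $V_1$), same definition of $\ell$ (the paper calls it $m$), same application of Proposition~\ref{prop:prefix} to the $V_2$ prefix, and the same charging of unmatched $V_1$-vertices to their partners' $V_2$-matches, with your invocation of property~(4) simply making explicit what the paper compresses into the phrase ``every vertex in $W_1$ belongs to a path in $H$ of length~1.''
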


\begin{proof}
Consider a permutation $\pi$ over $V$ in which $V_2$ precedes $V_1$. Let $\sigma$ be an arbitrary permutation over $U$. Let $m$ denote the number of vertices in $V_2$ that are matched to vertices in $U_1$.
Then $m \le |M_{12}|$. By Proposition~\ref{prop:prefix}, the total number of vertices matched in $V_2$ is at least $m + \frac{1}{2}(|V_2| - m)$. In addition, the total number of vertices matched in $V_1$ is at least $|V_1| - m$, by the fact that every vertex in $W_1$ belongs to a path in $H$ of length 1. Hence the total number of vertices matched is at least $|W_1|+\frac{|W_2|}{2}- \frac{m}{2} \ge |W_1|+\frac{|W_2|}{2} - \frac{|M_{12}|}{2}$, as desired.
\end{proof}

%

For a maximal path cover with $p$ paths and $k$ isolated vertices, we define $\epsilon_1 = \frac{1}{2} - \frac{k}{n}$.
We also define $\epsilon_3 = \frac{1}{2} - \frac{1}{n}|M_{12}|$.

\begin{lemma}
\label{lem:largeM12}
With parameters defined as above, $\rho(G) \ge \frac{2}{3} - \frac{1}{3}(\epsilon_1 + \epsilon_3)$.
\end{lemma}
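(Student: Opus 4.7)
Substituting $\epsilon_1 = \frac{1}{2} - \frac{k}{n}$ and $\epsilon_3 = \frac{1}{2} - \frac{|M_{12}|}{n}$, the target inequality rearranges to $\rho(G)\,n \ge (n + k + |M_{12}|)/3$. A mere convex combination of the guarantees from Lemmas~\ref{lem:sort1} and~\ref{lem:M12} will not suffice: $|M_{12}|$ enters Lemma~\ref{lem:M12} with a negative sign, so we genuinely need a construction whose guarantee \emph{increases} with $|M_{12}|$.

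The plan is to exhibit a single permutation $\pi^\star$ that uses $M_{12}$ constructively. Let $V_2^\star \subseteq V_2$ be the $W_2$-endpoints of $M_{12}$ and $V_1^\star \subseteq V_1$ the corresponding $W_1$-endpoints. I will take $\pi^\star$ whose first block is $V_2 \setminus V_2^\star$, second block is $V_1$, and final block is $V_2^\star$. The intuition is that deferring $V_2^\star$ to the end means each $u_{a_l} \in U_1^\star$ sees its default partner $v_{a_l} \in V_1^\star$ (in the middle block) before its $M_{12}$-partner $v_{b_l}$, and so is more likely to match into $V_1$ than to be diverted into $V_2^\star$; meanwhile $V_2^\star$ can still rely on its default partners $u_{b_l}$.

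I would then analyze this via two applications of Proposition~\ref{prop:prefix}: at the cut-off after the first block, and at the cut-off after the second block. At each cut-off, the number of ``external'' matchings (edges from the prefix in $V$ to vertices in $U$ outside the corresponding prefix of $U$) is bounded using the relevant arcs of $H$. The crucial maximality property of $M_{12}$ is that any additional $U_1$-to-$(V_2 \setminus V_2^\star)$ match in the greedy run witnesses an $H$-arc that could have extended $M_{12}$, bounding the number of such diversions. Adding up the matched counts across the three blocks and combining with the $p - k$ starts matched by a Lemma~\ref{lem:sort1}-style prefix, I expect the simplification to yield the target $(n + k + |M_{12}|)/3$.

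The main obstacle I anticipate is formalizing the ``either-we-gain-a-$V_1^\star$-match-or-we-lose-a-$V_2^\star$-match'' dichotomy under the worst-case $\sigma$, and extracting the correct coefficient $1/3$ on $|M_{12}|$ in the final bound. A refined prefix-matching lemma, or a linear-programming-style aggregation over the three block contributions analogous to the argument used in the proof of Lemma~\ref{lem:sort2}, will likely be needed to complete the calculation.
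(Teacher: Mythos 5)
Your rearranged target $\rho(G)\,n \ge (n + k + |M_{12}|)/3$ is correct, and you are right that a convex combination of Lemmas~\ref{lem:sort1} and~\ref{lem:M12} cannot produce it. However, the construction and the mechanism you propose both diverge from what actually works here, and the argument is left incomplete in a way that is not merely a matter of calculation.

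The paper simply takes $\pi$ with $V_1$ (the degenerate, length-$1$ paths) \emph{first}, followed by $V_2$, with no particular order inside $V_2$. This exploits the one clean fact available: because the path cover is maximal, there are no arcs of $H$ between two isolated vertices, so the entire prefix $V_1$ is matched for free, giving $|W_1|$ matched vertices outright. Your $\pi^\star = (V_2^-, V_1, V_2^+)$ destroys that guarantee: a vertex $w_a\in W_1$ that is matched in $M_{12}$ can perfectly well have an arc $(w_a,w_c)$ into $W_2^-$, so $v_c\in V_2^-$ (now ahead of $v_a$) can spoil $v_a$. You would therefore lose control of a $|V_1^\star|=|M_{12}|$-sized chunk of the middle block, and it is not clear you could recover the $1/3$ coefficient on $|M_{12}|$ afterwards. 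The intuition that ``deferring $V_2^\star$ protects it'' is also pointing in the wrong direction: the role of $M_{12}$ in the paper's proof is purely a counting one. Each unmatched vertex of $V_2$ has designated neighbors in $U$ that must all already be matched; vertices in $V_2^+$ contribute \emph{two} such neighbors (the $M$-partner and the $M_{12}$-partner in $U_1$), vertices in $V_2^-$ contribute one. This gives the single inequality $|W_1| + x \ge |W_2| - |M_{12}| + 2(|M_{12}| - x)$ (for $x\le |M_{12}|$), which solves directly to $x\ge (|W_2|+|M_{12}|-|W_1|)/3$ and hence the claimed bound; no iterated application of Proposition~\ref{prop:prefix} and no LP-style aggregation are needed. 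In short: keep $V_1$ in the prefix, do not try to ``protect'' $V_2^+$ by placement, and replace the two-cut-off plan with the designated-neighbor count.
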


\begin{proof}
According to the parameters defined above, $|M_{12}|=n(\frac{1}{2}-\epsilon_3),$
and $|W_1|=k=n(\frac{1}{2}-\epsilon_1)$ \big(and hence $|W_2|=n(\frac{1}{2}+\epsilon_1)$\big).
Consider a permutation $\pi$ over $V$ in which $V_1$ precedes $V_2$. Let $\sigma$ be an arbitrary permutation over $U$. All vertices in $V_1$ are matched since every vertex in $W_1$ belongs to a path in $H$ of length 1. As to the vertices in $V_2$, partition then into two sets as follows. $V_2^+$ is the set of vertices participating in the matching $M_{12}$ (hence $|V_2^+| = |M_{12}|$), and $V_2^-$ are the remaining vertices (hence $|V_2^-| = |W_2| - |M_{12}|)$. Now we assign distinct designated neighbors to vertices in $V_2$. Each vertex in $V_2^-$ has one designated neighbor (its partner in $U$), whereas each vertex in $V_2^+$ has two designated neighbors (its partner in $U$ and one more neighbor through the matching $M_{12}$). Let $x$ denote the number of vertices in $V_2$ matched under $(\pi,\sigma)$.
Then $|W_2|-x$ 
vertices of $V_2$ remain unmatched and their designated neighbors need to be matched. 

We claim that the total number of matched vertices satisfies $|W_1| + x \ge |W_2| - |M_{12}| + 2(|M_{12}| - x)$.
To see this, if $x \leq |M_{12}|$, then the right hand side is a lower bound on the number of designated neighbours of unmatched vertices in $V_2$.
If $x > |M_{12}|$, then we can use $|W_1| + x \geq |W_2|-x \geq |W_2| - |M_{12}| + 2(|M_{12}| - x)$, where the first inequality holds since the right hand side is a lower bound on the number of designated neighbours of unmatched vertices in $V_2$.

We get that
\begin{eqnarray*}
	x&\geq& \frac{|W_2|+|M_{12}|-|W_1|}{3}\\
	& = & \frac{n(\frac{1}{2}+\epsilon_1)+n(\frac{1}{2}-\epsilon_3)-n(\frac{1}{2}-\epsilon_1)}{3}\\
	& = & \left(\frac{1}{6} + \frac{2\epsilon_1}{3} - \frac{\epsilon_3}{3}\right)n.
\end{eqnarray*}
Therefore, the size of the matching is at least
$$
|W_1| + x \geq \left( \frac{1}{2} - \epsilon_1\right)n + \left(\frac{1}{6} + \frac{2\epsilon_1}{3} - \frac{\epsilon_3}{3}\right)n = \left(\frac{2}{3} - \frac{1}{3}(\epsilon_1 + \epsilon_3)\right) n.
$$
\end{proof}

\begin{thm}
\label{thm:main}
There is some absolute constant $\rho \ge \frac{1}{2} + \frac{1}{86}$ such that $\rho(G) \ge \rho$ for every graph $G$. Moreover, a permutation $\pi$ ensuring a matching of size at least $\rho n$ can be computed in polynomial time.
\end{thm}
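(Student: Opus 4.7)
The plan is to combine the four lower bounds on $\rho(G)$ that are already established in Lemmas~\ref{lem:sort1}, \ref{lem:sort2}, \ref{lem:M12}, and \ref{lem:largeM12}. Given a graph $G$, the algorithm builds the spoiling graph $H$, computes in polynomial time a maximal path cover (yielding $p$ and $k$), and computes in polynomial time the maximum matching $|M_{12}|$ between $W_1$ and $W_2$ via any standard bipartite matching routine. Each of the four lemmas then provides, in polynomial time, an explicit permutation $\pi$ achieving the bound stated in that lemma; the algorithm simply returns the permutation whose guaranteed bound (a function of $p$, $k$, $|M_{12}|$ alone) is the largest of the four.

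Normalize by writing $a=p/n$, $b=k/n$, $c=|M_{12}|/n$; these satisfy $0\le c\le b\le a\le 1$. In these variables the four lemmas read
\begin{align*}
\rho(G) &\ge 2a-b, & \rho(G) &\ge \tfrac{5}{9}-\tfrac{a}{9},\\
\rho(G) &\ge \tfrac{1}{2}+\tfrac{b-c}{2}, & \rho(G) &\ge \tfrac{1}{3}+\tfrac{b+c}{3}.
\end{align*}
It therefore suffices to show that, for any feasible $(a,b,c)$, the maximum of the four right-hand sides is at least $\tfrac{1}{2}+\tfrac{1}{86}$.

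I would prove this by contradiction. Suppose all four quantities are strictly below $\tfrac{1}{2}+\tfrac{1}{86}=\tfrac{22}{43}$. From the second inequality, $a>\tfrac{1}{2}-\tfrac{9}{86}$; substituting into the first, $b>2a-\tfrac{1}{2}-\tfrac{1}{86}>\tfrac{1}{2}-\tfrac{19}{86}$; from the third, $c>b-\tfrac{1}{43}>\tfrac{1}{2}-\tfrac{21}{86}$; adding the estimates for $b$ and $c$ gives $b+c>1-\tfrac{40}{86}=\tfrac{1}{2}+\tfrac{3}{86}$, contradicting the fourth inequality. Hence the maximum is at least $\tfrac{22}{43}$. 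Equivalently, solving the underlying LP directly, all four bounds coincide at $a=\tfrac{17}{43},\ b=\tfrac{12}{43},\ c=\tfrac{11}{43}$, which also shows that no better constant is extractable from these four inequalities alone.

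The main conceptual work is not in this final step but in the four preceding lemmas, which together encode the coupling between the path-cover structure of $H$ and the role of spoiling vertices; in particular, the careful isolation of $|M_{12}|$ in Lemmas~\ref{lem:M12} and~\ref{lem:largeM12} is exactly what prevents the adversary from simultaneously saturating the path-based and the matching-based bounds at any common value below $\tfrac{22}{43}$. Given the four lemmas, the theorem reduces to the elementary arithmetic above, and the polynomial-time claim follows because every ingredient (maximal path cover, bipartite matching, and each of the four explicit constructions of $\pi$) is polynomial-time computable.
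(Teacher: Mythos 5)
Your proposal is correct and takes essentially the same approach as the paper: both combine the four lemmas as a small linear program in normalized variables (your $a,b,c$ correspond to the paper's $\epsilon_1=\tfrac12-b$, $\epsilon_2=a-b$, $\epsilon_3=\tfrac12-c$), obtaining the identical tight point and the same polynomial-time algorithm. The paper solves the LP by forming two convex combinations to eliminate $\epsilon_2$ and $\epsilon_3$ and then finding the crossover in $\epsilon_1$, whereas you present the same conclusion as a short contradiction chain; this is a purely cosmetic difference.
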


\begin{proof}
Recall that for a maximal path cover with $p$ paths and $k$ isolated vertices, $\epsilon_1 = \frac{1}{2} - \frac{k}{n}$. We also define $\epsilon_2 = \frac{p-k}{n}$. Hence $p = (\frac{1}{2} - \epsilon_1 + \epsilon_2)n$. Observe that $\epsilon_2 \ge 0$, though $\epsilon_1$  might be negative. By Lemma~\ref{lem:sort1} we have:

$$\rho(G) \ge \frac{1}{2} - \epsilon_1 + 2\epsilon_2.$$

By Lemma~\ref{lem:sort2} we have:

$$\rho(G) \ge \frac{1}{2} + \frac{\epsilon_1}{9}  - \frac{\epsilon_2}{9}.$$

Let $W_1$ (of size $(\frac{1}{2} - \epsilon_1)n$) and $W_2$ (of size $(\frac{1}{2} + \epsilon_1)n$) and $M_{12}$ be as defined prior to Lemma~\ref{lem:M12}.
Recall that $\epsilon_3 = \frac{1}{2} - \frac{1}{n}|M_{12}|$. By Lemma~\ref{lem:M12} we have:

$$\rho(G) \ge \frac{1}{2} - \frac{\epsilon_1}{2}  + \frac{\epsilon_3}{2}.$$

By Lemma~\ref{lem:largeM12} we have:

$$\rho(G) \ge \frac{2}{3} - \frac{1}{3}(\epsilon_1 + \epsilon_3).$$

Now the optimal value of $\rho$ can be found by minimizing $\rho$ subject to the constraints $\epsilon_2, \epsilon_3 \ge 0$ and the four inequalities derived above for $\rho$.

The first two lemmas imply that $\rho(G) \ge \frac{1}{2} + \frac{\epsilon_1}{19}$, and the last two lemmas imply that $\rho(G) \ge \frac{3}{5} - \frac{2\epsilon_1}{5}$. For $\epsilon_1 = \frac{19}{86}$ both inequalities give $\rho(G) \ge \frac{1}{2} + \frac{1}{86}$, and for other values of $\epsilon_1$ at least one of the two inequalities gives an even better bound. 
For $\epsilon_1 = \frac{19}{86}$, $\epsilon_2 = \frac{10}{86}$ and $\epsilon_3 = \frac{21}{86}$ all four lemmas give $\rho(G) \ge \frac{1}{2} + \frac{1}{86}$, and for other values of $\epsilon_1, \epsilon_2, \epsilon_3$ at least one of the four lemmas gives an even better bound. 

All steps in the analysis above can be implemented by polynomial time algorithms.
\end{proof}

\section{Regular Graphs}
\label{sec:regular}

In this section we consider the case where $G(U,V;E)$ is a $d$-regular bipartite graph with $2n$ vertices. Given that such graphs have $d$ edge disjoint perfect matchings, one can hope to achieve higher values for $\rho$ for these graphs.

\subsection{Positive Result}
\label{sec:regular-pos}
The following proposition establishes a lower bound on $\rho$, as a function of $d$.

\begin{proposition}
\label{pro:smalld}
For every $d$-regular graph $G$, it holds that $\rho[G] \ge  \frac{d}{2d-1}$.
\end{proposition}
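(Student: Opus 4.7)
The plan is to show the stronger claim that for every $d$-regular bipartite graph $G$ admitting a perfect matching, every permutation $\pi$ over $V$ (and every $\sigma$ over $U$) already produces a greedy matching of size at least $\frac{d}{2d-1}n$. Thus $\pi$ plays no role in the analysis; the bound comes purely from $d$-regularity via a simple double-counting estimate on a carefully chosen set of pairs.

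First I would pin down the structural property of unmatched vertices. Fix $\pi$ and $\sigma$, let $M = M_G[\sigma,\pi]$, and let $V_u \subseteq V$ be the set of vertices left unmatched by $M$, with $V_m = V \setminus V_u$. If $v \in V_u$, then $v$ is available throughout the greedy process, so whenever a neighbor $u \in N(v)$ arrives it is certainly matched (either to $v$ itself or to some vertex it ranks strictly below $v$ under $\pi$). Since $v$ ends up unmatched, $u$ is in fact matched to some $v' \neq v$ with $\pi(v') < \pi(v)$. In particular, \emph{every} neighbor in $U$ of every vertex in $V_u$ is matched by $M$.

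The core of the argument is a double count on
$$P = \{(v,v') \in V_u \times V_m : \text{some } u \in N(v) \text{ is matched to } v' \text{ in } M\}.$$
From the $v$-side, each $v \in V_u$ has $d$ neighbors $u_1,\dots,u_d$ in $U$, all of which are matched, and to $d$ distinct partners $v'_1,\dots,v'_d$ in $V_m$ (distinct because $M$ is a matching). Hence $|P| = d|V_u|$. From the $v'$-side, each $v' \in V_m$ is matched from a unique $u' \in U$, and the pairs $(v,v')$ involving $v'$ are in bijection with the neighbors of $u'$ that lie in $V_u$. Since $u'$ has exactly $d$ neighbors, one of which is $v'$ itself (and $v' \notin V_u$), at most $d-1$ of $u'$'s neighbors belong to $V_u$, giving $|P| \le (d-1)|V_m|$.

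Combining the two counts yields $d|V_u| \le (d-1)(n-|V_u|)$, and rearranging gives $|V_u| \le \frac{d-1}{2d-1}n$, so $|M| \ge \frac{d}{2d-1}n$ as desired. There is no real obstacle; the only subtle point is the first step, where one must use that an unmatched $v$ remains available for the entire run of the greedy algorithm in order to conclude that \emph{every} neighbor of $v$ is matched.
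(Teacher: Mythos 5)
Your proof is correct and is essentially the paper's argument in a mirrored form. The paper observes that the greedy matching is maximal, so the unmatched vertex sets $S\subset U$ and $T\subset V$ form an independent set, and then double-counts edges between $S$ (unmatched $U$-vertices) and $V\setminus T$ (matched $V$-vertices), using $d$-regularity on both sides. You re-derive maximality directly from the greedy process (that an unmatched $v$ stays available forever, so all its neighbors get matched), and then double-count the symmetric object: edges, repackaged as pairs $(v,v')$, from unmatched $V$-vertices to matched $U$-vertices. The two counts are related by swapping the roles of $U$ and $V$, and both yield the inequality $d\cdot(\text{\# unmatched}) \le (d-1)\cdot(\text{\# matched})$, hence the same bound. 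There is no substantive difference in technique or generality.
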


\begin{proof}
Since the greedy algorithm produces a maximal matching, it suffices to show that every maximal matching in a $d$-regular graph has size at least $\frac{d}{2d-1}n$.
To see this, let $S \subset U$ and $T \subset V$ be the sets of unmatched nodes in an arbitrary maximal matching, and suppose $|S|=|T|=(1-\alpha) n$. The nodes in $S,T$ must form an independent set.
Consider the size of the edge set connecting $S$ and $V \setminus T$.
On the one hand, this size equals $(1-\alpha) n d$ (since all edges from $S$ go to $V \setminus T$); on the other hand, this size is at most $\alpha n (d-1)$ (since at least one edge from each node in $V \setminus T$ goes to $U \setminus S$).
Thus, $(1-\alpha) n d \leq \alpha n (d-1)$, implying that $\alpha \geq d/(2d-1)$.
Hence we have that $|M_G[\sigma,\pi]| \ge  \frac{d}{2d-1}n$, for every $\pi$.
\end{proof}

Remark: For every $d$ there exists a $d$-regular graph with a perfect matching that admits a maximal matching of size $\frac{d}{2d-1}n$. Suppose that $n = 2d-1$, and
consider a $d$-regular graph where $|S|=|T|=d-1$ for some $S \subset U, T \subset V$, every node in $U \setminus S$ is connected to a single, different node in $V \setminus T$, and to all $d-1$ nodes in $T$, and every node in $V \setminus T$ is connected to a single, different node in $U \setminus S$, and to all $d-1$ nodes in $S$. The perfect matching between $U \setminus S$ and $V \setminus T$ is a maximal matching of size $\frac{d}{2d-1}n$.

The lower bound of Proposition~\ref{pro:smalld} approaches $\frac{1}{2}$ from above as $d$ grows.
The following theorem shows that there exists some permutation $\pi$ that ensures that the fraction of matched vertices approaches $5/9$.
This is a direct corollary from Lemma \ref{lem:sort2} and a theorem in \cite{FRS14}.

\begin{corollary}
\label{cor:regular}
For $d$-regulars bipartite graphs, $\rho \ge \frac{5}{9} - O(\frac{1}{\sqrt{d}})$.
\end{corollary}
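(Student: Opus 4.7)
The plan is to apply Lemma~\ref{lem:sort2} with a strong enough bound on the number of paths $p$ in a maximal path cover of the spoiling graph $H$. Lemma~\ref{lem:sort2} gives $\rho(G) \ge \frac{5}{9} - \frac{p}{9n}$, so it suffices to exhibit a maximal path cover of $H$ with $p \le O(n/\sqrt{d})$, which will yield the claimed bound $\rho(G) \ge \frac{5}{9} - O(1/\sqrt{d})$.

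First, I would observe that for a $d$-regular bipartite graph $G(U,V;E)$ on $2n$ vertices, the spoiling graph $H$ is a $d$-regular directed graph on $n$ vertices: vertex $w_i$ has out-degree equal to the degree of $u_i$ in $G$ (namely $d$), and in-degree equal to the degree of $v_i$ in $G$ (also $d$). Hence $H$ falls into the class of regular digraphs to which the theorem of \cite{FRS14} applies, giving an initial path cover of $H$ with at most $O(n/\sqrt{d})$ paths.

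Next, I would convert this initial path cover into a maximal one without increasing the number of paths. This is possible because the only operation in the definition of maximality that changes $p$ is path merging, which strictly decreases $p$; path unbalancing and rotation preserve $p$. Thus, starting from the path cover guaranteed by \cite{FRS14} and iteratively applying merging and unbalancing (possibly preceded by rotations) until none is available, we arrive at a maximal path cover whose number of paths $p$ is still at most $O(n/\sqrt{d})$. Substituting into Lemma~\ref{lem:sort2} yields $\rho(G) \ge \frac{5}{9} - \frac{1}{9}\cdot O(1/\sqrt{d}) = \frac{5}{9} - O(1/\sqrt{d})$, as claimed.

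The only substantive external input is the path-cover theorem from \cite{FRS14}; once that is invoked, the argument is a direct specialization of Lemma~\ref{lem:sort2} together with the monotonicity of $p$ under the three allowed path-cover operations. There is no genuine obstacle beyond verifying that the spoiling graph of a $d$-regular bipartite graph is a $d$-regular digraph, which is immediate from the definition of $H$.
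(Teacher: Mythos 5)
Your argument is the same as the paper's: apply the linear-forest bound from \cite{FRS14} to the ($d$-regular) spoiling digraph $H$ to get a path cover with $O(n/\sqrt{d})$ paths, then feed $p=O(n/\sqrt{d})$ into Lemma~\ref{lem:sort2}. The only addition is your explicit observation that the merging/unbalancing/rotation operations never increase the path count, so the \cite{FRS14} cover can be turned into a maximal one (as Lemma~\ref{lem:sort2} requires) without worsening the bound on $p$ — a correct detail the paper leaves implicit, not a different route.
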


\begin{proof}
Theorem~3 in~\cite{FRS14} shows that every $n$-vertex $d$-regular graph has a path cover (referred to as a {\em linear forest}) with  $p = O(\frac{n}{\sqrt{d}})$ paths.  By Lemma~\ref{lem:sort2}, $\rho(G) \ge \frac{5}{9} - O(\frac{1}{\sqrt{d}})$.
\end{proof}

{\bf Remarks:}

\begin{enumerate}

\item For small $d$, the bound of $\rho \ge \frac{d}{2d - 1}$ which holds for every maximal matching is stronger than the bound in Corollary~\ref{cor:regular}.

\item The proof of Corollary~\ref{cor:regular} extends to graphs that are nearly $d$-regular, by using Theorem~5 from~\cite{FRS14}.

\item For $d$-regular graphs, conjectures mentioned in~\cite{FRS14} combined with our proof approach suggest that $\rho \ge \frac{5}{9} - O(\frac{1}{d})$.

\end{enumerate}

\subsection{Negative Result}
\label{sec:regular-neg}

The following example shows that even in a regular graph with arbitrarily high degree, there may be no permutation $\pi$ that ensures to match more than a fraction $8/9$ of the vertices.

\begin{thm}
For every integers $d, t \ge 1$, there is a regular bipartite graph $G_{d,t}$ of even degree $2d$ and $n= 3dt$ vertices on each side such that $\rho(G_{d,t}) \le \frac{8}{9}$.
\end{thm}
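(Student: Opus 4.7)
The natural starting point is to generalize the $C_6$ example (which achieves $\rho = 2/3$) to the required parameters. I would construct $G_{d,t}$ as a \emph{blow-up} of $C_6$: partition both $U$ and $V$ into three blocks $U_1, U_2, U_3$ and $V_1, V_2, V_3$ of size $dt$ each, and for each $i \in \{1,2,3\}$ (indices mod $3$) connect $U_i$ to $V_i$ and to $V_{i+1}$ via $d$-regular bipartite subgraphs. This produces a $2d$-regular bipartite graph on $n = 3dt$ vertices per side. The choice of the within-block $d$-regular subgraphs matters: a naive biclique can be checked to give $\rho = 1 - \frac{1}{3d}$, which exceeds $8/9$ once $d \ge 4$, so for large $d$ a more careful choice is required (for instance a random $d$-regular bipartite graph, or an explicit algebraic construction with good spreading properties). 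One can also consider taking $t$ disjoint copies of a fixed bad gadget of smaller size to handle the large-$t$ regime uniformly.

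To prove $\rho(G_{d,t}) \le 8/9$, I would show that, given any $\pi$ on $V$, the adversary can construct $\sigma$ forcing at least $n/9 = dt/3$ unmatched vertices. The attack mimics the $C_6$ rotation at the block level. Given $\pi$, I assign each of the three $V$-blocks a representative position (e.g.\ its median position in $\pi$), yielding an ``early / mid / late'' labeling of the three blocks; after relabeling, assume $V_1$ is earliest, $V_2$ middle, and $V_3$ latest. The adversary then sends $\sigma = U_1, U_3, U_2$ in three waves. In the first wave $U_1$ (whose neighbors lie in $V_1 \cup V_2$) is diverted primarily into $V_2$ vertices interleaved early in $\pi$; in the second wave $U_3$ (neighbors in $V_3 \cup V_1$) sweeps up the remaining $V_1$ and much of $V_3$; in the third wave $U_2$ (neighbors in $V_2 \cup V_3$) finds its neighborhood largely exhausted and many of its vertices must go unmatched.

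The main obstacle is the quantitative counting in the third wave. A purely block-level analysis would actually yield the stronger bound $\rho \le 2/3$ (as happens in the $C_6$ base case), but the algorithm can use an interleaved $\pi$ to let some $U_2$-vertices escape by matching into ``spillover'' positions deep inside $\pi$. The refined argument must exploit the within-block $d$-regular structure together with a pigeonhole on the $3 \times 3$ matrix of counts (three $V$-blocks $\times$ three positional thirds of $\pi$) to show that at least a $1/3$ fraction of $U_2$ is stuck regardless of $\pi$. The factor $9 = 3 \times 3$ in the denominator of $8/9$ emerges precisely from this combination of the three-block cyclic obstruction with the three-phase attack; making this count tight in full generality --- in particular showing it holds for every $\pi$ and every admissible $d, t$ --- is the technical heart of the theorem and is where the most care is required, likely via concentration for random within-block subgraphs or a verifiable expansion property of an explicit construction.
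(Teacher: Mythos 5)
Your high-level intuition --- generalize the $C_6$ obstruction to a three-block cyclic gadget and attack it with a three-wave $\sigma$ --- is the same as the paper's, but you commit to the harder instantiation of that idea and never close the gap. You parenthetically mention the key move (``one can also consider taking $t$ disjoint copies of a fixed bad gadget'') and then set it aside; that throwaway remark is exactly what the paper does, and it makes the whole theorem an elementary calculation.

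The paper's gadget has $3d$ vertices per side partitioned into blocks $U_1,U_2,U_3$ and $V_1,V_2,V_3$ of size $d$, with a \emph{complete} bipartite graph between $U_i$ and $V_j$ for every $i\neq j$ and no edges between $U_i$ and $V_i$. Because the block size equals $d$, each inter-block subgraph is $K_{d,d}$, and the whole graph is $2d$-regular with no further design choices needed. The adversary's argument is then short: given $\pi$, let $T$ be the last $d$ vertices, pick $i$ with $|V_i\cap T|\ge d/3$, and note (via Hall, immediate from the biclique structure) that there is a perfect matching from $U_j\cup U_k$ (the two blocks other than $U_i$) to the first $2d$ vertices of $\pi$. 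Send $U_j\cup U_k$ first in the order that realizes this matching; then the first $2d$ positions of $\pi$ are exhausted, and $T\cap V_i$ has no neighbors in $U_i$ (by construction), so at least $d/3$ vertices stay unmatched, giving $\rho\le 8/9$. The target $G_{d,t}$ is simply $t$ disjoint copies: the argument applies to the restriction of $\pi$ to each copy independently, and the copies do not interact under greedy.

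Where your proposal falls short: by taking blocks of size $dt$ with arbitrary (random/algebraic) $d$-regular bipartite graphs between adjacent block pairs, you lose exactly the Hall-type exhaustion property that the biclique gives for free. Your third-wave counting, the $3\times 3$ pigeonhole over positional thirds, and the appeal to concentration or expansion are all machinery you introduce to compensate, and you correctly flag that this is ``where the most care is required'' --- but you do not carry it out, and it is not clear it would even succeed for a generic $d$-regular inter-block graph (the adversary needs to be able to burn a chosen prefix of $\pi$, which a random $d$-regular inter-block graph does not obviously permit). Also, your claim that a ``naive biclique'' gives $\rho=1-\tfrac{1}{3d}$ does not parse: a full biclique between blocks of size $dt$ has degree $2dt$, not $2d$, so it is not an admissible comparison, and you should either drop or justify this assertion. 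The correct move is to not fight the blow-up at all: build one small gadget where the degree equals the block size so the inter-block graph is forced to be a biclique, verify the $8/9$ bound there with the elementary Hall argument, and scale to $n=3dt$ by disjoint union.
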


\begin{proof}
Consider a regular bipartite graph $G(U,V;E)$ with even degree $2d$, and $3d$ vertices on each side. To define the edge set, let $U = U_1 \cup U_2 \cup U_3$ with each $U_i$ of cardinality $d$, and similarly $V = V_1 \cup V_2 \cup V_3$ with each $V_i$ of cardinality $d$. For every $i \not= j$, we have a complete bipartite graph between $U_i$ and $V_j$, and for every $i$, there are no edges between $U_i$ and $V_i$.

Let $\pi$ be an arbitrary permutation over $V$, let $S$ be the first $2d$ vertices in $\pi$, and let $T$ be the last $d$ vertices. Let $i$ be such that $|V_i \cap T|$ is largest (breaking ties arbitrarily). Without loss of generality we may assume that $i = 3$, and then $|V_3 \cap T| \ge d/3$. Hall's condition implies that there is a perfect matching between $U_1 \cup U_2$ and $S$ (and more generally, between $U_1 \cup U_2$ and any $2d$ vertices from $V$). Hence one can choose a permutation $\sigma$ over $U$ whose first $2d$ vertices are $U_1 \cup U_2$ that will match the vertices of $S$ one by one. Thereafter, the vertices of $T \cap V_3$ will remain unmatched.

To get the graph $G_{d,t}$ claimed in the theorem, take $t$ disjoint copies of $G(U,V;E)$ above.
\end{proof}

\section{Random Permutation}
\label{sec:random}

In this section we consider scenarios in which the designer is unaware of the graph structure.
In such scenarios, the best she can do is impose a random permutation over the vertices in $V$.
Thus, we study the performance of a random permutation.

We first show that there exists a graph $G$ for which a random permutation does not match significantly more than a half of the vertices, even if every vertex has a high degree.

\begin{proposition}
There exists a bipartite graph $G(U,V;E)$ such that a random permutation gets $\rho(G)=\frac{1}{2}+o(1)$ almost surely.
\label{ex:half}
\end{proposition}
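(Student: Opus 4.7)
The plan is to exhibit a bipartite graph $G$ together with an explicit adversarial response $\sigma(\pi)$ such that for a uniformly random permutation $\pi$ over $V$, the greedy matching $M_G[\sigma(\pi),\pi]$ has size $(\tfrac{1}{2}+o(1))n$ with probability $1-o(1)$. Any such construction must be substantially more intricate than the small ``bad gadget'' examples of the paper (such as the Cohen-Addad triangle), since those only guarantee $\rho\le 2/3$; pushing $\rho$ down to $1/2+o(1)$ requires a construction that forces cascading unmatched vertices at a global scale for \emph{typical} $\pi$, not just for adversarial $\pi$.

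My first attempt would use Proposition~\ref{pro:smalld} as a guide: a $d$-regular graph with $d=d(n)\to\infty$ slowly automatically satisfies the lower bound $\rho(G)\ge \tfrac{d}{2d-1}=\tfrac{1}{2}+o(1)$, so one ``only'' needs a matching \emph{upper} bound of the same order for random $\pi$. I would take $G$ to be (a variant of) a random $d$-regular bipartite graph and define the adversary's strategy to process $u$-vertices in decreasing order of the $\pi$-rank of their intended matching partner, so that each arriving $u$ tends to waste its choice on a low-rank $v$ that is not on its own matching edge. The analysis would then track the fraction of $u$-vertices that arrive to find all their neighbors already matched, aiming to show via a branching-process / second-moment calculation on the random graph that roughly half of them become blocked. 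Finally, a concentration argument (Azuma or McDiarmid) over the random $\pi$ would upgrade the expectation bound to an almost-sure statement, and the probabilistic method would produce an explicit $G$ from the random ensemble.

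The main obstacle is the analytic step: proving that such an explicit adversarial strategy actually achieves the $\tfrac{1}{2}+o(1)$ ratio. On overly structured graphs (for instance, disjoint copies of the extremal gadget from the Remark after Proposition~\ref{pro:smalld}), the event that $\pi$ ``exploitably'' orders the vertices within a copy occurs with probability only $\binom{2d-1}{d-1}^{-1}$, exponentially small in $d$, so the naive product construction fails and the expected matching remains close to $n$ rather than $n/2$. Overcoming this requires either engineering a single graph with many \emph{symmetric} exploitable configurations (so that for typical $\pi$ at least one such configuration is adversely ranked), or switching to a probabilistic-method argument on a random graph ensemble rich enough that the adversary's strategy succeeds in expectation. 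Either route reduces to a delicate combinatorial analysis of the greedy matching process under random rankings, which is the heart of the proof.
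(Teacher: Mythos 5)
Your proposal is a plan rather than a proof: you explicitly defer the ``heart of the proof'' (the analysis of the greedy process under random rankings on a random graph ensemble) to future work, so there is a genuine gap. More seriously, the direction you choose is a dead end. A random $d$-regular bipartite graph is precisely the kind of graph on which the statement \emph{cannot} be established: as the introduction already notes, for random $d$-regular graphs almost surely \emph{every} maximal matching has size $n - O\bigl(\tfrac{n\log d}{d}\bigr)$, and greedy always outputs a maximal matching, so no adversarial $\sigma$ (for any $\pi$, random or otherwise) can drive the match size down to $(\tfrac{1}{2}+o(1))n$. Any ``variant'' that retains the random regular structure would inherit this expansion and suffer the same fate; Proposition~\ref{pro:smalld} is not the relevant obstruction.

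The paper's construction sidesteps regularity entirely, and is much simpler than anything you propose. Take $U=U_1\cup U_2$, $V=V_1\cup V_2$, each part of size $n/2$, with a perfect matching between $U_1$ and $V_1$, a perfect matching between $U_2$ and $V_2$, and a complete bipartite graph between $U_1$ and $V_2$. Every $v\in V_1$ has degree one, but its unique neighbor in $U_1$ has $n/2$ ``decoy'' neighbors in $V_2$. For a uniformly random $\pi$, with high probability all but $O(\sqrt{n})$ of the $V_1$ vertices can be assigned distinct $V_2$ vertices of lower $\pi$-rank; the adversary sends $U_1$ first, ordered so that each $u_{1j}$ greedily takes its assigned decoy in $V_2$ instead of its partner in $V_1$. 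This starves nearly all of $V_1$ and leaves a matching of size $(\tfrac{1}{2}+o(1))n$, with no branching-process, second-moment, or martingale machinery needed --- only the elementary observation that in a random interleaving of two equal-size sets the maximum deficit is $O(\sqrt{n})$. The idea you were missing is that one should not insist on regularity or on robustness against degree-based heuristics at this stage; the paper handles the degree-based objection separately by a second, balanced-degree construction described immediately after this proposition.
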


\begin{proof}
Consider the graph $G(U,V;E)$, where $U=(U_1,U_2)$, $V=(V_1,V_2)$, and each of $U_1,U_2,V_1,V_2$ is of size $n/2$. The set of edges constitutes of a perfect matching between $U_1$ and $V_1$, a perfect matching between $U_2$ and $V_2$, and a bi-clique between $U_1$ and $V_2$.
Let $\pi$ be a random permutation.
With high probability, for each vertex $v_1 \in V_1$, except for $\sim\sqrt{n}$ such vertices, we can associate a unique vertex $v_2 \in V_2$ that precedes $v_1$ in $\pi$. Let $S \subseteq V_1$ denote this set.
Consider an arrival order $\sigma$ in which agents in $U_1$ arrive first, with a vertex $u_{1j}$ preceding a vertex $u_{1j'}$ if $\pi(v_{2j})<\pi(v_{2j'})$. Every vertex in $U_1$ such that its neighbor in $V_1$ (according to the perfect matching) belongs to $S$ will be matched to the corresponding vertex in $V_2$.
Therefore, all but $\sim\sqrt{n}$ vertices of $V_1$ remain unmatched, and the size of the matching is $n(1/2 +o(1))$, whereas $OPT=n$.
\end{proof}

In the above example, if the degree 1 vertices of $V$ are placed in the prefix of $\pi$, then the matching obtained is optimal. Hence, one might think that if we prioritize low degree neighbors in $\pi$, and randomize w.r.t. the partial order, we might get a good approximation.
However, one can make a similar example to the one above, where vertices are partitioned into sets of perfect matchings of size $\sqrt{n}$, $\{(U_{11},V_{11}),\ldots, (U_{1\sqrt{n}},V_{1\sqrt{n}}), (U_{21},V_{21}),\ldots, (U_{2\sqrt{n}},V_{2\sqrt{n}})\}$. Each $V_{1i}$ is also connected in a bi-clique to $U_{2i}$, and in addition, there are sets $U',V'$ of size $\sqrt{n}$ connected to the vertices of the other side to balance out the degrees. A similar argument shows that in this graph, a random permutation performs badly as well (no use of prioritizing low degree vertices since the degrees of all vertices are the same).

In contrast to the last examples, in some classes of graphs, a random permutation guarantees to match a fraction of the vertices that is bounded away from a half.
This is the case, for example, in hamiltonian graphs.
The formal statement and proof are deferred to Section~\ref{sec:hamiltonian}.

\section*{Acknowledgments}

A substantial part of this work was conducted in Microsoft Research, Herzeliya, Israel. The work of U. Feige was supported in part by the Israel Science Foundation
(grant No. 1388/16). The work of M. Feldman and A. Eden was partially supported by the European Research Council under the European Union's Seventh Framework Programme (FP7/2007-2013) / ERC grant agreement number 337122, and by the Israel Science Foundation (grant number 317/17).
We are grateful to Amos Fiat and Sella Nevo for numerous discussions that contributed significantly to the ideas presented in this paper.
We also thank Robert Kleinberg for helpful discussions.

\bibliographystyle{plain}
\bibliography{permutation-matching}

\begin{thebibliography}{10}

\bibitem{CohenW18}
Ilan~Reuven Cohen and David Wajc.
\newblock Randomized online matching in regular graphs.
\newblock In {\em Proceedings of the Twenty-Ninth Annual {ACM-SIAM} Symposium
  on Discrete Algorithms, {SODA} 2018, New Orleans, LA, USA, January 7-10,
  2018}, pages 960--979, 2018.

\bibitem{Cohen-AddadEFF16}
Vincent Cohen{-}Addad, Alon Eden, Michal Feldman, and Amos Fiat.
\newblock The invisible hand of dynamic market pricing.
\newblock In {\em Proceedings of the 2016 {ACM} Conference on Economics and
  Computation, {EC} '16, Maastricht, The Netherlands, July 24-28, 2016}, pages
  383--400, 2016.

\bibitem{Demange2008}
M.~Demange and T.~Ekim.
\newblock Minimum maximal matching is {NP}-hard in regular bipartite graphs.
\newblock In {\em Proceedings of the 5th International Conference on Theory and
  Applications of Models of Computation}, TAMC'08, pages 364--374, Berlin,
  Heidelberg, 2008. Springer-Verlag.

\bibitem{DuttingFKL16}
Paul D{\"{u}}tting, Michal Feldman, Thomas Kesselheim, and Brendan Lucier.
\newblock Posted prices, smoothness, and combinatorial prophet inequalities.
\newblock {\em CoRR}, abs/1612.03161, 2016.

\bibitem{EzraFRS17}
Tomer Ezra, Michal Feldman, Tim Roughgarden, and Warut Suksompong.
\newblock Pricing identical items.
\newblock {\em CoRR}, abs/1705.06623, 2017.

\bibitem{FRS14}
Uriel Feige, R.~Ravi, and Mohit Singh.
\newblock Short tours through large linear forests.
\newblock In {\em Integer Programming and Combinatorial Optimization - 17th
  International Conference, {IPCO} 2014, Bonn, Germany, June 23-25, 2014.
  Proceedings}, pages 273--284, 2014.

\bibitem{FeldmanGL15}
Michal Feldman, Nick Gravin, and Brendan Lucier.
\newblock Combinatorial auctions via posted prices.
\newblock In {\em Proceedings of the Twenty-Sixth Annual {ACM-SIAM} Symposium
  on Discrete Algorithms, {SODA} 2015, San Diego, CA, USA, January 4-6, 2015},
  pages 123--135, 2015.

\bibitem{GS62}
David Gale and L.~S. Shapley.
\newblock College admissions and the stability of marriage.
\newblock {\em American Math. Monthly}, 69:9--15, 1962.

\bibitem{Goel2008}
Gagan Goel and Aranyak Mehta.
\newblock Online budgeted matching in random input models with applications to
  adwords.
\newblock In {\em Proceedings of the Nineteenth Annual ACM-SIAM Symposium on
  Discrete Algorithms}, SODA '08, pages 982--991, Philadelphia, PA, USA, 2008.
  Society for Industrial and Applied Mathematics.

\bibitem{Hartline2009}
Jason~D. Hartline and Tim Roughgarden.
\newblock Simple versus optimal mechanisms.
\newblock In {\em Proceedings of the 10th ACM Conference on Electronic
  Commerce}, EC '09, pages 225--234, New York, NY, USA, 2009. ACM.

\bibitem{HsuMRRV16}
Justin Hsu, Jamie Morgenstern, Ryan~M. Rogers, Aaron Roth, and Rakesh Vohra.
\newblock Do prices coordinate markets?
\newblock In {\em Proceedings of the 48th Annual {ACM} {SIGACT} Symposium on
  Theory of Computing, {STOC} 2016, Cambridge, MA, USA, June 18-21, 2016},
  pages 440--453, 2016.

\bibitem{Karande2011}
Chinmay Karande, Aranyak Mehta, and Pushkar Tripathi.
\newblock Online bipartite matching with unknown distributions.
\newblock In {\em Proceedings of the Forty-third Annual ACM Symposium on Theory
  of Computing}, STOC '11, pages 587--596, New York, NY, USA, 2011. ACM.

\bibitem{KarpVV1990}
R.~M. Karp, U.~V. Vazirani, and V.~V. Vazirani.
\newblock An optimal algorithm for on-line bipartite matching.
\newblock In {\em Proceedings of the Twenty-second Annual ACM Symposium on
  Theory of Computing}, STOC '90, pages 352--358, New York, NY, USA, 1990. ACM.

\bibitem{Li2017}
Shengwu Li.
\newblock Obviously strategy-proof mechanisms.
\newblock {\em American Economic Review}, 107(11):3257--87, 2017.

\bibitem{Lucier17}
Brendan Lucier.
\newblock An economic view of prophet inequalities.
\newblock {\em SIGecom Exchanges}, 16(1):24--47, 2017.

\bibitem{Mahdian2011}
Mohammad Mahdian and Qiqi Yan.
\newblock Online bipartite matching with random arrivals: An approach based on
  strongly factor-revealing {LP}s.
\newblock In {\em Proceedings of the Forty-third Annual ACM Symposium on Theory
  of Computing}, STOC '11, pages 597--606, New York, NY, USA, 2011. ACM.

\bibitem{Manshadi2012}
Vahideh~H. Manshadi, Shayan~Oveis Gharan, and Amin Saberi.
\newblock Online stochastic matching: Online actions based on offline
  statistics.
\newblock {\em Mathematics of Operations Research}, 37(4):559--573, 2012.

\bibitem{Mehta13}
Aranyak Mehta.
\newblock Online matching and ad allocation.
\newblock {\em Foundations and Trends in Theoretical Computer Science},
  8(4):265--368, 2013.

\end{thebibliography}

\medskip

\appendix
\section*{APPENDIX}
\setcounter{section}{0}

\section{Hamiltonian Bipartite Graphs}
\label{sec:hamiltonian}

In this section we show that in hamiltonian graphs the fraction of vertices that is guaranteed to be matched in a random permutation is bounded away from a half. This is in contrast to general graphs (see Section~\ref{sec:random}).

\begin{thm}
	For every Hamiltonian graph $G$, it holds that $E_{\pi}[\min_{\sigma}[|M_G[\sigma,\pi]|]] > 0.5012$.
\label{thm:random-hamiltonian}
\end{thm}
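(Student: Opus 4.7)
My plan is a ``bonus over maximality'' argument. The Hamiltonian cycle of $G$ can be oriented as $u_1 v_1 u_2 v_2 \cdots u_n v_n u_1$, providing both the perfect matching $M = \{(u_i, v_i)\}_{i=1}^n$ and a second perfect matching $M' = \{(u_{i+1}, v_i)\}_{i=1}^n$ (indices mod $n$) inside the cycle. Correspondingly, the spoiling graph $H$ from Section~2 contains a directed Hamiltonian cycle (arcs $w_{i+1} \to w_i$). Every greedy matching is maximal, so $|M_G[\sigma,\pi]| \ge n/2$ deterministically; the entire game is therefore to harvest a constant-order bonus over this baseline from a random $\pi$.

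The first step is a reduction to a bonus event: it suffices to exhibit a $\pi$-event $E$ with $\Pr_\pi[E] \ge \delta_1$ such that, conditioned on $E$, $\min_\sigma |M_G[\sigma,\pi]| \ge n/2 + \delta_2 n$ for positive constants $\delta_1,\delta_2$ satisfying $\delta_1\delta_2 > 0.0012$. Combining with the $n/2$ baseline on the complement of $E$ then yields
\[
E_\pi\bigl[\min_\sigma |M_G[\sigma,\pi]|\bigr] \;\ge\; \tfrac{n}{2} + \delta_1\delta_2 \cdot n \;>\; 0.5012\,n.
\]

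The second step is to choose $E$. I would partition the $n$ indices of the Hamiltonian cycle into $n/L$ disjoint segments of constant length $L$, and declare a segment \emph{lucky} if the restriction of $\pi$ to its $V$-vertices follows a prescribed pattern --- for concreteness, the two boundary $V$-vertices of the segment rank last in $\pi$ among the segment, with the interior $V$-vertices appearing in some fixed relative order. Each segment is lucky with a constant probability depending only on $L$, and luckiness is independent across disjoint segments, so by concentration a constant fraction of segments are lucky with overwhelming probability.

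The third step --- and the main obstacle --- is to show that each lucky segment contributes an additional matched vertex over the $n/2$ baseline. The strategy is to invoke Proposition~\ref{prop:prefix} on a prefix of $\pi$ formed by the early $V$-vertices inside lucky segments: the Hamiltonian cycle's arcs $w_{i+1}\to w_i$ in $H$ give enough internal connectivity that the prefix's ``outside-match'' parameter $\ell$ is forced to be strictly positive, yielding more than $k/2$ matched vertices in the prefix. The difficulty is that the adversary sees $\pi$ and can use off-cycle edges of $G$ to divert greedy matches away from the prefix; the lucky pattern must therefore be strong enough that the required matching inside the segment is pinned down by $\pi$ alone, independent of the $U$-schedule outside. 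Carefully tuning $L$ and the prescribed pattern so that the local bonus is robust against every $\sigma$, while keeping $\Pr_\pi[E]$ large enough for $\delta_1\delta_2$ to clear the threshold $0.0012$, is the delicate quantitative core of the argument.
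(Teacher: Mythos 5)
Your overall reduction --- find a $\pi$-event $E$ with $\Pr_\pi[E]\ge\delta_1$ on which $\min_\sigma|M_G[\sigma,\pi]|\ge(1/2+\delta_2)n$, then combine with the $n/2$ maximality floor --- is sound, and it is genuinely different from the paper's route. The paper does not work with a small-probability bonus event at all: it shows that a uniformly random $\pi$ is, with probability $1-o(1)$, \emph{safe} for every set $S\subset V$ of size $(1/2-\epsilon)n$ (meaning no $\sigma$ leaves all of $S$ unmatched), via a union bound over ``bad'' sets. The three ingredients are (i) a \emph{few bad sets} lemma, bounding the number of sets $S$ of size $(1/2-\epsilon)n$ with $|N(S)|\le(1/2+\epsilon)n$ using the Hamiltonian cycle's block structure, (ii) an \emph{expansion} lemma for a random $\alpha n$-sub-prefix of $S$ under $\pi$, and (iii) a \emph{good order} lemma, purely combinatorial, controlling how many $V\setminus S$ vertices can precede the $\alpha n$-th element of $S$ in a random $\pi$. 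The sufficient safety condition they exploit is global: if the lowest $\alpha n$ vertices $S'\subset S$ under $\pi$ satisfy $|N(S')|>|P|$, where $P$ is the set of $V\setminus S'$ vertices preceding the last element of $S'$, then some element of $S'$ must be matched for every $\sigma$.

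The genuine gap in your proposal is the third step, and I don't think it is a tuning issue --- I think the claim as stated is false for general Hamiltonian graphs. You want a relative-order pattern on a constant-length segment of the cycle that ``pins down'' an extra match within that segment against \emph{every} $\sigma$, using only the cycle arcs $w_{i+1}\to w_i$ and Proposition~\ref{prop:prefix}. But greedy matching is globally coupled: when $u_i$ arrives, it is sent to its $\pi$-lowest unmatched neighbor anywhere in $V$, and which neighbors are still unmatched depends on the entire past $\sigma$-history, not just the local segment. Off-cycle edges mean the adversary can route each $u_i$ out of the segment to a far-away, very-low-$\pi$ vertex, or can pre-fill the low-$\pi$ vertices of your segment with $u$'s from other segments before the ``local'' $u$'s ever arrive, dissolving any local pattern you prescribed. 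Concretely, Proposition~\ref{prop:prefix} only rewards you when you can force $\ell>0$ for a prefix $V_{[k]}$, but the adversary can always drive $\ell$ to zero on the prefix by scheduling $U_{[k]}$ first and exploiting off-cycle edges among those vertices; your argument gives no mechanism that prevents this. You acknowledge this is ``the delicate quantitative core,'' but you have not exhibited a pattern with the required robustness, and no such purely local pattern can exist: any single $u$ can be diverted to any lower-ranked unmatched neighbor, and which ones are unmatched is an adversary-controlled global quantity. This is precisely why the paper's proof reasons about all potentially-unmatched sets $S$ simultaneously, uses the Hamiltonian structure only to bound expansion and count bad sets, and then closes with a union bound --- a global argument that sidesteps any attempt to localize the greedy process.

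A secondary, smaller issue: even if the local claim held, you would need the bonus to survive conditioning. Luckiness is a relative-order event within each segment (so independence across disjoint segments is fine), but the adversary sees the \emph{entire} realized $\pi$, including the interleaving of segments in the global order. Your analysis would still have to show that lucky segments survive an adversary tuned to the full $\pi$, not to the conditional distribution given $E$.
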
	

\subsection{Proof Approach}
\label{sec:proof-overview}

We first provide a high level overview of our proof approach.

A permutation $\pi$ (over $V$) is said to be {\em safe} for a set $S \subset V$ if for every permutation $\sigma$ (over $U$) the greedy process matches at least one vertex in $S$ (i.e., no $\sigma$ leaves all vertices in $S$ unmatched).
Fix some constant $\epsilon$. In order to establish that $\rho \geq (1/2 + \epsilon)$, we need to show that there exists a permutation $\pi$ that is safe for every set $S$ of size $(1/2-\epsilon)n$.
Our proof approach is the following: we show that for a permutation $\pi$ chosen uniformly at random, the expected number (expectation taken over choice of $\pi$) of sets of size $(1/2-\epsilon)n$ for which $pi$ is unsafe is smaller than $1$.
This implies that there exists a permutation $\pi$ that is safe for all sets of size $(1/2-\epsilon)n$, as desired.

First, we define a collection of sets that can potentially remain unmatched (``bad" sets).
Let $B_\epsilon$ denote the set of all sets $S \subset U$ of size $(1/2-\epsilon)n$ such that there exists a permutation $\pi$ that is unsafe for $S$.

Second, for a given set $S$ and permutation $\pi$ we identify a sufficient condition for $\pi$ to be safe for $S$.
Let $S' \subset S$ be the lowest $\alpha n$ vertices in $S$ (according to $\pi$), let $v'$ be the last vertex in $S'$ (i.e., the vertex with rank $\alpha n$ in $S'$), and let $P$ be the set of vertices in $V - S'$ that precede $v'$ in $\pi$.
We claim that if the size of $P$ is smaller than the size of $N(S')$ (the neighbors of $S'$), then $\pi$ is safe for $S$.
To see this, assume by way of contradiction that $\pi$ is unsafe for $S'$.
This implies that every vertex in $N(S')$ is matched to a vertex in $V - S'$.
Since there are strictly less than $|N(S')|$ vertices in $V - S'$ that precede $v'$, at least one of the vertices in $N(S')$ must be matched to a vertex higher than $v'$. But, this vertex has a neighbor in $S'$ with lower rank, contradicting the greedy process.

We now proceed by establishing the following three lemmas:
\begin{itemize}
  \item {Few bad sets lemma:} the size of $B_\epsilon$ is at most $n_B=n_B(\epsilon)$.
  \item {\em Expansion lemma:}
  given a set $S \subset V$ and parameters $\alpha, \beta$, the probability (over a random choice of $\pi$) that the lowest $\alpha n$ vertices in $S$ have less than $\beta n$ neighbors is at most $p=p(\alpha,\beta)$.
  \item {\em Good order lemma:} given a set $S \subset V$ and parameters $\alpha, \beta$, the probability (over a random choice of $\pi$) that the $(\alpha n)^{th}$ lowest vertex in $S$ is higher than $\beta n$ vertices in $V \setminus S$ is at most $q=q(\alpha,\beta)$.
\end{itemize}

The three lemmas are combined as follows.
For a given set $S$, due to the sufficient condition identified above, it follows from the union bound that the probability that a uniformly random permutation $\pi$ is unsafe for $S$ is at most $p+q$.
Applying the union bound once more over all bad sets (at most $n_B$ sets, as implied by the few bad sets lemma), implies that the probability that a uniformly random permutation $\pi$ is unsafe for some set of size $(1/2-\epsilon)n$ is at most $n_B(p+q)$.
Thus, to conclude the proof, it remains to find parameters such that $n_B(p+q) < 1$.

The good order lemma is independent of the graph structure.
In contrast, the expansion lemma and the few bad sets lemma rely heavily on the structure of the graph.
As it turns out, Hamiltonian graphs have properties that enable us to establish the two lemmas with good parameters.


\subsection{Formal Proof}

Throughout this section we use $H(\cdot)$ to denote the binary entropy function; i.e., given a constant $p\in (0,1)$, $H(p)=-p\log_2 p-(1-p)\log_2(1-p)$.

\begin{fact}[Stirling's Approximation]
	As $n\rightarrow\infty$, $$n!=(1+o(1))\sqrt{2\pi n}\left(\frac{n}{e}\right)^n.$$
\end{fact}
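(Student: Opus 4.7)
The plan is to prove Stirling's approximation by the standard two-step route: first show that the sequence $a_n := n!/(\sqrt{n}(n/e)^n)$ converges to \emph{some} positive constant $C$, and then identify $C = \sqrt{2\pi}$ via Wallis' product. This splits a rather delicate asymptotic into one easy analytic step and one identification step, and the remaining $(1+o(1))$ factor follows directly from $a_n \to C$.

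For the convergence step, I would study $\log(a_n/a_{n+1})$. A direct computation gives
\[
\log\!\frac{a_n}{a_{n+1}} \;=\; \bigl(n+\tfrac{1}{2}\bigr)\log\!\bigl(1+\tfrac{1}{n}\bigr) - 1.
\]
Expanding $\log(1+1/n) = 1/n - 1/(2n^2) + 1/(3n^3) - \cdots$ and multiplying by $n+\tfrac{1}{2}$ shows that this quantity is $O(1/n^2)$. Hence the telescoping series $\sum_n \log(a_n/a_{n+1})$ converges absolutely, so $\log a_n$ converges to a finite limit, so $a_n \to C$ for some $C>0$. (One can also observe that the summands have fixed sign for large $n$, which gives monotonicity of $\log a_n$ from some point on and makes the existence of the limit transparent.)

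For the identification step, I would use Wallis' product
\[
\prod_{k=1}^{n}\frac{(2k)^2}{(2k-1)(2k+1)} \;\longrightarrow\; \frac{\pi}{2}.
\]
Rewriting the left-hand side in closed form gives $\dfrac{2^{4n}(n!)^4}{((2n)!)^2\,(2n+1)}$. Substituting the asymptotic $n! \sim C\sqrt{n}(n/e)^n$ and $(2n)! \sim C\sqrt{2n}(2n/e)^{2n}$ into this expression, the powers of $2^{4n}$ and the factors $(n/e)^{4n}$ cancel cleanly, leaving
\[
\frac{C^{2}\,n^{2}}{2n(2n+1)} \;\longrightarrow\; \frac{C^{2}}{4}.
\]
Equating this with $\pi/2$ forces $C = \sqrt{2\pi}$, which together with the first step yields $n! = (1+o(1))\sqrt{2\pi n}(n/e)^n$.

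The main obstacle is the identification of the constant: the convergence argument alone does not produce $\sqrt{2\pi}$, and some external computation involving the number $\pi$ is essential. Wallis' product is the most elementary such input, but one could alternatively run Laplace's method on the Gamma integral $n! = \int_0^\infty t^n e^{-t}\,dt$, substitute $t = n(1+s/\sqrt{n})$, and use the Gaussian integral $\int_{-\infty}^{\infty} e^{-s^2/2}\,ds = \sqrt{2\pi}$; that route folds both steps into a single saddle-point estimate but requires a careful tail bound to justify truncation. I would choose the Wallis route since each individual step is a routine manipulation.
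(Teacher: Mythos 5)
Your proof is correct. The paper states Stirling's approximation as a classical \emph{fact} without proof (it is used only as a tool to derive the binomial estimate $\binom{n}{pn}=2^{(H(p)+o(1))n}$), so there is no in-paper argument to compare against; your two-step route --- establishing convergence of $a_n = n!/(\sqrt{n}\,(n/e)^n)$ via the telescoping estimate $\log(a_n/a_{n+1}) = (n+\tfrac12)\log(1+\tfrac1n)-1 = O(1/n^2)$, and then pinning down the constant $C=\sqrt{2\pi}$ through Wallis' product --- is the standard complete derivation, and the individual computations (the closed form $2^{4n}(n!)^4/\bigl(((2n)!)^2(2n+1)\bigr)$ and the resulting limit $C^2/4=\pi/2$) all check out.
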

Using Stirling's Approximation, one can derive the following bound.
\begin{fact}\label{fact:binom_estimate}
	For $n$ and $k=pn$ for some constant $p\in (0,1)$,
	\begin{eqnarray}
		\binom{n}{k}=2^{(H(p)+o(1))n},\label{eq:binom_approx}
	\end{eqnarray}
	where $H(p)=-p\log_2 p-(1-p)\log_2(1-p)$ is the binary entropy function.
\end{fact}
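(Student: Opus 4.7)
The plan is to derive the bound by taking logarithms of both sides and applying Stirling's approximation (Fact~A.2) directly to each factorial appearing in $\binom{n}{k} = \frac{n!}{k!(n-k)!}$. Since $p$ is a constant in $(0,1)$, both $k = pn$ and $n-k = (1-p)n$ tend to infinity with $n$, so Stirling applies to all three factorials.

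First I would substitute Stirling's approximation to obtain
\[
\binom{n}{pn} = (1+o(1)) \cdot \frac{\sqrt{2\pi n}}{\sqrt{2\pi pn}\,\sqrt{2\pi(1-p)n}} \cdot \frac{(n/e)^n}{(pn/e)^{pn}\,((1-p)n/e)^{(1-p)n}}.
\]
The factors of $e$ in the numerator and denominator cancel because the exponents sum to $n$. The ratio of square-root terms equals $\frac{1}{\sqrt{2\pi p(1-p) n}}$, which is a sub-exponential quantity of order $n^{-1/2}$.

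Next I would take $\log_2$ of the main exponential term. Writing $n^n = (pn)^{pn} \cdot (pn)^{(1-p)n}$ style rearrangements, one gets
\[
\log_2 \frac{n^n}{(pn)^{pn}((1-p)n)^{(1-p)n}} = n\log_2 n - pn\log_2(pn) - (1-p)n\log_2((1-p)n).
\]
Expanding $\log_2(pn) = \log_2 p + \log_2 n$ and similarly for $\log_2((1-p)n)$, the $\log_2 n$ terms cancel exactly (since $pn + (1-p)n = n$), leaving
\[
-pn\log_2 p - (1-p)n\log_2(1-p) = H(p)\cdot n.
\]

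Finally, I would account for the lower-order contributions. The $(1+o(1))$ factor from Stirling and the $\Theta(n^{-1/2})$ prefactor from the square-root terms together contribute $\log_2\!\bigl((1+o(1))/\sqrt{2\pi p(1-p)n}\bigr) = -\tfrac12 \log_2 n + O(1)$, which is $o(n)$. Thus $\log_2 \binom{n}{pn} = H(p)\,n + o(n)$, which is exactly the claim $\binom{n}{pn} = 2^{(H(p)+o(1))n}$. There is no genuine obstacle here; the only minor care needed is to verify that the two $\log_2 n$ terms in the main expansion cancel exactly, which they do because $p + (1-p) = 1$, and to confirm that the polynomial-in-$n$ prefactors are absorbed into the $o(n)$ in the exponent.
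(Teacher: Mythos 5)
Your derivation is correct and is exactly the route the paper intends: the paper states this fact as a direct consequence of Stirling's approximation without writing out the details, and your substitution of Stirling into all three factorials, cancellation of the $\log_2 n$ terms, and absorption of the $\Theta(n^{-1/2})$ prefactor into the $o(n)$ exponent is the standard, complete way to fill those details in. No issues.
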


We first establish the good order lemma. This lemma is independent of the graph structure.

\begin{lemma} [Good order lemma]
	Let $\alpha < \beta < 1$, $\rho= \frac{1}{2}+\epsilon$ for some $\epsilon>0$ 
	and $\bar{\rho}=1-\rho$ such that $\frac{\beta}{\alpha} > \frac{\rho}{\bar{\rho}}$. Let $S\subset V$ be a set of size $\bar{\rho}n$. The probability that in a random permutation $\pi$ there are at least $\beta n$ vertices of $V\setminus S$  before $\alpha n$ vertices from $S$ is at most $2^{-(H(\alpha+\beta)-H(\frac{\alpha}{\bar{\rho}})\bar{\rho}-H(\frac{\beta}{{\rho}}){\rho}-o(1))n}$.\label{lem:order}
\end{lemma}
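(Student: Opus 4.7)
The plan is to recognize this as a standard hypergeometric large-deviation estimate and evaluate it directly via Stirling's approximation (Fact \ref{fact:binom_estimate}). First I would translate the event into standard form: saying ``at least $\beta n$ vertices of $V\setminus S$ appear before $\alpha n$ vertices of $S$'' is equivalent to saying that among the first $m := \lceil(\alpha+\beta)n\rceil$ positions of $\pi$ there are at most $\alpha n$ vertices from $S$. Letting $X$ denote the number of $S$-vertices in this prefix, $X$ is hypergeometric with parameters $(n, \bar{\rho}n, m)$, and the probability in question is exactly
$$\Pr[X \le \alpha n] \;=\; \sum_{k=0}^{\lfloor \alpha n\rfloor} \frac{\binom{\bar{\rho}n}{k}\binom{\rho n}{m-k}}{\binom{n}{m}}.$$

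Next I would show that under the hypothesis $\beta/\alpha > \rho/\bar{\rho}$ the sum is dominated by its last term, at $k = \lfloor \alpha n\rfloor$. The ratio of consecutive summands is
$$\frac{\binom{\bar{\rho}n}{k+1}\binom{\rho n}{m-k-1}}{\binom{\bar{\rho}n}{k}\binom{\rho n}{m-k}} \;=\; \frac{(\bar{\rho}n-k)(m-k)}{(k+1)(\rho n - m + k + 1)},$$
and evaluating at $k \approx \alpha n$ simplifies (after $n$ cancels) to $\bar{\rho}\beta/(\alpha\rho)$, which exceeds $1$ precisely under the stated assumption. Hence the summand is strictly increasing throughout $[0,\alpha n]$, so the full sum is at most $(\alpha n + 1)$ times the $k = \lfloor \alpha n\rfloor$ term; this polynomial factor will be harmlessly absorbed into an $o(1)$ in the exponent.

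Finally I would plug in the Stirling estimates: $\binom{\bar{\rho}n}{\alpha n} = 2^{(\bar{\rho}H(\alpha/\bar{\rho}) + o(1))n}$, $\binom{\rho n}{\beta n} = 2^{(\rho H(\beta/\rho) + o(1))n}$, and $\binom{n}{(\alpha+\beta)n} = 2^{(H(\alpha+\beta) + o(1))n}$. Combining these gives the target bound
$$\Pr[X \le \alpha n] \;\le\; 2^{-(H(\alpha+\beta) - \bar{\rho}H(\alpha/\bar{\rho}) - \rho H(\beta/\rho) - o(1))n}.$$

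The main obstacle is bookkeeping rather than conceptual difficulty: one must treat quantities such as $\alpha n$, $\beta n$, $\bar{\rho}n$ as integers (or absorb the rounding into $o(1)$ via Stirling), and track the $+1$ corrections in the consecutive-ratio calculation so that the mode-dominance argument really holds on the full range $[0,\alpha n]$. It is comforting to observe that the exponent $H(\alpha+\beta) - \bar{\rho}H(\alpha/\bar{\rho}) - \rho H(\beta/\rho)$ is, up to the factor $\alpha+\beta$, the KL-divergence between the empirical composition $(\alpha,\beta)/(\alpha+\beta)$ and the expected composition $(\bar{\rho},\rho)$; it vanishes exactly when $\beta/\alpha = \rho/\bar{\rho}$ and is strictly positive under our hypothesis, confirming that the bound is meaningful (indeed exponentially small).
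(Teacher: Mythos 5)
Your proposal is correct and follows essentially the same path as the paper's proof: both translate the event to ``at most $\alpha n$ vertices of $S$ among the first $(\alpha+\beta)n$ positions,'' show via the same consecutive-ratio computation that the summand is maximized at $k=\alpha n$ under the hypothesis $\beta/\alpha>\rho/\bar\rho$, absorb the $O(\alpha n)$ factor from the sum into the $o(1)$, and finish with the Stirling estimates of Fact~\ref{fact:binom_estimate}. Your explicit framing of the quantity as a hypergeometric tail (and the KL-divergence sanity check) is a pleasant packaging of the same argument, not a different one.
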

\begin{proof}
	We first analyze the case that in the first $(\alpha+\beta)n$ vertices in $\pi$ there are {\em exactly} $\alpha n$ vertices from $S$. The number of possibilities for this case is $\binom{\bar{\rho} n}{\alpha n}\binom{\rho n}{\beta n}$.
	
	Let $\beta' = \beta+x$ and $\alpha'=\alpha-x$. By the conditions on $\alpha,\beta$ and $\epsilon$, we have that $\frac{\beta'}{\alpha'}\geq\frac{\beta}{\alpha}\geq \frac{\rho}{\bar{\rho}}$. Therefore,
	\begin{eqnarray*}
		& &  \beta'\bar{\rho}\geq \alpha'\rho\Rightarrow \beta'\bar{\rho}-\alpha'\beta'\geq \alpha'\rho-\alpha'\beta'\Rightarrow \frac{(\bar{\rho}-\alpha')}{\alpha'}\cdot \frac{\beta'}{(\rho-\beta')}\geq 1\\
		& & \Rightarrow \frac{(\bar{\rho}n-\alpha'n+1)}{\alpha'n}\cdot \frac{\beta'n+1}{(\rho n-\beta'n)}\geq 1 \iff\frac{\binom{\bar{\rho} n}{\alpha' n}\binom{\rho n}{\beta' n}}{\binom{\bar{\rho} n}{\alpha' n-1}\binom{\rho n}{\beta' n+1}}>1.
	\end{eqnarray*}
	It follows that
	$\binom{\bar{\rho}n}{\alpha n}\binom{\rho n}{\beta n}> \binom{\bar{\rho} n}{\alpha'n}\binom{\rho n}{\beta' n}$ for every $\alpha'< \alpha$  and $\beta'>\beta$ such that $\alpha+\beta=\alpha'+\beta'$. Therefore, the probability to have at most $\alpha n$ vertices from $S$ in the first $(\alpha+\beta)n$ vertices in $\pi$ is at most
	\begin{eqnarray*}
		\frac{\alpha n\cdot \binom{\bar{\rho}n}{\alpha n}\binom{\rho n}{\beta n}}{\binom{n}{(\alpha+\beta)n}}&=&\frac{2^{(H(\frac{\alpha}{\bar{\rho}})\bar{\rho}+H(\frac{\beta}{{\rho}}){\rho}+o(1))n}}{2^{(H(\alpha+\beta)+o(1))n}}\\
		& = & 2^{-(H(\alpha+\beta)-H(\frac{\alpha}{\bar{\rho}})\bar{\rho}-H(\frac{\beta}{{\rho}}){\rho}-o(1))n},
	\end{eqnarray*}
	where the first equality follows Fact \ref{fact:binom_estimate}.
\end{proof}

Let $\rho= \frac{1}{2}+\epsilon$ for some constant $\epsilon>0$, $\bar{\rho}=1-\rho=\frac{1}{2}-\epsilon$. The next lemma will be used in order to prove the few bad sets lemma and the expansion lemma. It uses the existence an Hamiltonian cycle in the graph in order to claim that most sets will have a large number of neighbors. Therefore, a random set will have a large expansion. In addition, there will be few sets of size $(\frac{1}{2}-\epsilon)n$ with less than $(\frac{1}{2}+\epsilon)n$ neighbors (i.e., a few bad sets).

\begin{lemma}
	Let $\alpha\in (0, 1/2)$ and $\beta \in (\alpha, 1)$ be two constants such that $\delta=\beta-\alpha < \alpha/2$. The number of sets $S$ of size $\alpha n$ where $|N(S)|\leq \beta n$ is at most $2^{(\alpha H(\frac{\delta}{ \alpha})+(1-\alpha)H(\frac{\delta} {(1-\alpha)})+o(1))n}.$
\label{lem:ham_alpha_beta}
\end{lemma}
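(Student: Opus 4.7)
The plan is to exploit the Hamilton cycle of $G$. Fix a Hamilton cycle $C$; since $G$ is bipartite with $n$ vertices on each side, $C$ has length $2n$ and its vertices alternate between $U$ and $V$. Contracting each $U$-vertex of $C$ yields a cycle $C_V$ of length $n$ on the vertex set $V$, where each edge of $C_V$ corresponds bijectively to the unique $U$-vertex that separated its endpoints along $C$. That $U$-vertex is adjacent in $G$ to both endpoints of the corresponding $C_V$-edge.

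Given $S \subset V$ with $|S| = \alpha n$, I would lower bound $|N(S)|$ by counting edges of $C_V$ incident to $S$: each such edge yields a distinct element of $N(S)$. If $S$ decomposes into $r$ maximal cyclic runs along $C_V$, then so does $V \setminus S$, and the number of edges of $C_V$ lying entirely inside $V \setminus S$ equals $(1-\alpha)n - r$. Therefore $|N(S)| \ge n - ((1-\alpha)n - r) = \alpha n + r$, and the hypothesis $|N(S)| \le \beta n = (\alpha + \delta)n$ forces $r \le \delta n$.

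It then remains to count subsets $S \subset V$ of size $\alpha n$ whose cyclic arrangement on $C_V$ has at most $\delta n$ maximal runs. A standard composition argument (fix a rotational anchor, then split $\alpha n$ into $r$ positive parts and $(1-\alpha)n$ into $r$ positive parts) shows that the number of such subsets with exactly $r$ runs is at most $\frac{n}{r}\binom{\alpha n - 1}{r-1}\binom{(1-\alpha)n - 1}{r-1}$. Summing over $1 \le r \le \delta n$ and applying Fact~\ref{fact:binom_estimate} term by term yields, up to a polynomial-in-$n$ factor, the value at $r = \delta n$, namely $2^{(\alpha H(\delta/\alpha) + (1-\alpha) H(\delta/(1-\alpha)) + o(1))n}$, which is exactly the claimed bound.

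The one point that needs verification, and which I expect to be the main obstacle, is that the summands really are maximized at $r = \delta n$. This is where the hypothesis $\delta < \alpha/2$ is used: it immediately gives $\delta/\alpha < 1/2$, and combined with $\alpha < 1/2$ it also yields $\delta < \alpha/2 < (1-\alpha)/2$, so $\delta/(1-\alpha) < 1/2$. Both binomials $\binom{\alpha n - 1}{r-1}$ and $\binom{(1-\alpha)n - 1}{r-1}$ therefore lie in their increasing regime throughout $r \le \delta n$, and their joint growth dominates the mild decrease from the $n/r$ prefactor, so the entire sum is swallowed by the $2^{o(n)}$ slack.
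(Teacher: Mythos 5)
Your proposal is correct and follows essentially the same route as the paper: both recognize that the Hamilton cycle forces $|N(S)|$ to exceed $|S|$ by at least the number of maximal cyclic runs of $S$, both reduce to counting subsets with at most $\delta n$ runs via a composition argument on the circular arrangement of $V$, and both invoke the binomial estimate to arrive at the stated entropy bound. Your version is slightly tidier — you make the contraction to the cycle $C_V$ on $V$ explicit, you correctly state $|N(S)| \ge \alpha n + r$ as an inequality (the paper phrases it as equality, which is only true for cycle-neighbors), and you use the exact $\tfrac{n}{r}$ rotational correction where the paper settles for the cruder factor $n$ — but all these differences are absorbed in the $2^{o(n)}$ slack and do not constitute a different proof.
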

\begin{proof}
\begin{AvoidOverfullParagraph}
	Consider a Hamiltonian cycle that traverses through the graph's vertices $H=(v_1,u_1,v_2,u_2,\ldots, v_n,u_n,v_1)$, where $\{v_i\}_{i\in [n]}=V$ and $\{u_i\}_{i\in [n]}=U$. Let $S$ be some set of vertices from $V$ of cardinality $\rho n$. Note that in the cycle $H$, each vertex $v$ of $S$ has two neighbors, where one of these neighbors is joined with  an adjacent vertex from $V$ in the cycle. Therefore, the number of neighbors of a sequence of $k$ consecutive vertices of $V$ in $H$ is $k+1$. Thus, the set $N(S)$ is of size $\alpha n$ plus the number of consecutive blocks of vertices from $V$ chosen.
\end{AvoidOverfullParagraph}
	
	We bound the number of ways to pick at most $\delta n$ consecutive blocks of vertices from $V$. We first bound the number of ways to pick {\em exactly} $\delta n$ such blocks.
	In this case, the $\alpha n$ chosen elements have to be within $\delta n$ blocks. The number of ways to partition $\alpha n$ elements to $\delta n$ noen empty blocks is $\binom{\alpha n - 1}{\delta n - 1}$. After deciding the number of elements in each block, we need to figure out their  location along the Hamiltonian cycle. $(1-\alpha)n$ elements reside outside of the blocks of chosen $\alpha n$ elements. We need to chose the location of the first block in $H$ (for which there are $n$ possibilities), and then the number of element between each block, where two blocks are separated by at least one element. The latter is equivalent to splitting $(1-\alpha)n$ elements into $\delta n$ non empty bins, for which there are $\binom{(1-\alpha)n-1}{\delta n - 1}$ possibilities. Overall, there are $n\binom{\alpha n - 1}{\delta n - 1}\binom{(1-\alpha)n-1}{\delta n - 1}$ such possibilities\footnote{Notice there's some over-counting in this argument, but this bound suffices for our purpose.}.
	
	For $\delta'< \delta$, one can similarly devise the bound of $n\binom{\alpha n - 1}{\delta' n - 1}\binom{(1-\alpha)n-1}{\delta' n - 1}$ which is smaller than $n\binom{\alpha n - 1}{\delta n - 1}\binom{(1-\alpha)n-1}{\delta n - 1}$ by our conditions on $\alpha$ and $\delta$. Overall, we can bound the number of ways to pick at most $\delta n$ consecutive blocks of vertices from $V$ by
	\begin{eqnarray*}
	\delta n^2\binom{\alpha n - 1}{\delta n - 1}\binom{(1-\alpha)n-1}{\delta n - 1} & < & \delta n^2\binom{\alpha n}{\delta n}\binom{(1-\alpha)n}{\delta n}\\
	&=& 2^{o(1)n}\cdot 2^{(H(\frac{\delta}{ \alpha})+o(1))\alpha n}\cdot 2^{(H(\frac{\delta} {(1-\alpha)})+o(1))(1-\alpha) n}\\
	& = &2^{(\alpha H(\frac{\delta}{ \alpha})+(1-\alpha)H(\frac{\delta} {(1-\alpha)})+o(1))n},
	\end{eqnarray*}
	where the first equality follows Fact \ref{fact:binom_estimate}.
\end{proof}

The expansion and few bad sets lemmas are obtained as direct corollaries of Lemma~\ref{lem:ham_alpha_beta}.

\begin{lemma}[Few bad sets Lemma for Hamiltonian graphs]
	Let $\epsilon$ be a constant such that $\epsilon<0.1$. The number of bad sets in any Hamiltonian graph is at most
	$$|B_\epsilon|\leq  2^{\left(\bar{\rho} H(\frac{2\epsilon}{\bar{\rho}})+\rho H(\frac{2\epsilon} {\rho})+o(1)\right)n}.$$ \label{lem:ham_bad_sets}
\end{lemma}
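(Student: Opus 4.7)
The plan is to reduce the bound on $|B_\epsilon|$ directly to Lemma~\ref{lem:ham_alpha_beta}, via the observation that a bad set of size $\bar\rho n$ must have a small neighborhood, namely $|N(S)|\le \rho n$. Once this structural fact is in hand, the stated bound falls out by substituting $\alpha=\bar\rho$ and $\beta=\rho$ into Lemma~\ref{lem:ham_alpha_beta}, so that $\delta=\beta-\alpha=2\epsilon$ and $1-\alpha=\rho$.

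For the first step, I would argue as follows. Suppose $S\subset V$ with $|S|=\bar\rho n$ is bad, and fix $\pi,\sigma$ under which no vertex of $S$ is matched by the greedy process. Consider any $u\in N(S)$ and some neighbor $v\in N(u)\cap S$. Since $v$ is (by assumption) never matched, $v$ is still unmatched at the moment $u$ arrives; hence $u$ has at least one unmatched neighbor at that moment and therefore gets matched by greedy to some vertex $v'\in N(u)$. For $v'$ not to lie in $S$ (which would contradict $S$ staying empty), we must have $v'\in V\setminus S$. Mapping each $u\in N(S)$ to its partner under this matching yields an injection $N(S)\hookrightarrow V\setminus S$, so $|N(S)|\le |V\setminus S|=\rho n$.

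The second step is now a direct application of Lemma~\ref{lem:ham_alpha_beta}: with $\alpha=\bar\rho$ and $\beta=\rho$ we have $\delta=2\epsilon$, and the hypothesis $\delta<\alpha/2$ reads $2\epsilon<(1/2-\epsilon)/2$, i.e.\ $\epsilon<1/10$, matching the standing assumption. Since $B_\epsilon$ is contained in the collection of sets of size $\alpha n$ whose neighborhood has size at most $\beta n$, the lemma gives
$$|B_\epsilon|\le 2^{(\alpha H(\delta/\alpha)+(1-\alpha)H(\delta/(1-\alpha))+o(1))n}=2^{(\bar\rho H(2\epsilon/\bar\rho)+\rho H(2\epsilon/\rho)+o(1))n},$$
which is exactly the claimed bound.

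There is no real obstacle here: the Hamiltonian cycle structure is entirely absorbed into Lemma~\ref{lem:ham_alpha_beta}, and the reduction itself is only a Hall-type observation about what it means for every vertex of $S$ to remain unmatched. The most delicate point is merely a bookkeeping check that the side condition $\delta<\alpha/2$ of Lemma~\ref{lem:ham_alpha_beta} is equivalent, under our choice of parameters, to the hypothesis $\epsilon<0.1$.
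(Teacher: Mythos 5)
Your proof is correct and follows essentially the same route as the paper: observe that a bad set $S$ of size $\bar\rho n$ must have $|N(S)|\le\rho n$ (since all of $N(S)$ would have to be matched into $V\setminus S$), then invoke Lemma~\ref{lem:ham_alpha_beta} with $\alpha=\bar\rho$, $\beta=\rho$, $\delta=2\epsilon$. Your write-up just spells out the Hall-type injection and the $\delta<\alpha/2$ bookkeeping more explicitly than the paper's terse version.
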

\begin{proof}
	Notice that if a set $S$ of size $\bar{\rho} n = (\frac{1}{2}-\epsilon)n$ has more than $\rho n$ neighbors, it cannot be left unmatched, since at least one of it's neighbors will could not be matched to $V\setminus S$.
	A direct application of Lemma~\ref{lem:ham_alpha_beta} yields that the number of such sets is at most $2^{\left(\bar{\rho} H(\frac{2\epsilon}{\bar{\rho}})+\rho H(\frac{2\epsilon} {\rho})+o(1)\right)n}.$
\end{proof}

We note that this lemma is not true for general graphs.
An example of a graph that admits $2^{n/4}$ bad sets is given in Proposition~\ref{ex:many-bad}.

\begin{lemma}[Expansion Lemma for Hamiltonian graphs]
	Consider a set $S\subset V$ of size $\bar{\rho} n$ and parameters $\alpha$, $\beta$. The probability that the lowest $\alpha n$ vertices in $S$ have less than $\beta n$ neighbors is at most
	$$2^{\left(-H(\frac{\alpha}{\bar{\rho}})+\alpha H(\frac{\delta}{ \alpha})+(1-\alpha)H(\frac{\delta} {(1-\alpha)})+o(1)\right)\bar{\rho}n}.$$   \label{lem:ham_expansion}	
\end{lemma}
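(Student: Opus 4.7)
The plan is a two-level union bound that reduces the claim to the Hamiltonian subset-counting lemma (Lemma~\ref{lem:ham_alpha_beta}) that has already been established. Fix the set $S\subset V$ with $|S|=\bar{\rho} n$, and let $L_\pi(S)$ denote the set of $\alpha n$ lowest-ranked vertices of $S$ under the random permutation $\pi$. The quantity we need to bound is $\Pr_\pi[\,|N(L_\pi(S))|<\beta n\,]$.

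The first step is to condition on the identity of $L_\pi(S)$. Since $\pi$ is uniformly random, the restriction of $\pi$ to $S$ is a uniformly random ordering of the $\bar{\rho} n$ vertices of $S$, and hence for any particular $T\subset S$ with $|T|=\alpha n$ we have $\Pr_\pi[L_\pi(S)=T]=1/\binom{\bar{\rho} n}{\alpha n}$ by symmetry. A union bound over the ``bad'' subsets then gives
$$\Pr_\pi\!\bigl[\,|N(L_\pi(S))|<\beta n\,\bigr]\;\le\;\frac{\#\{T\subset S:\ |T|=\alpha n,\ |N(T)|<\beta n\}}{\binom{\bar{\rho} n}{\alpha n}}.$$

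The second step is to bound the numerator. Every bad $T$ is in particular a subset of $V$ of size $\alpha n$ whose neighborhood has size $|N(T)|<\beta n=\alpha n+\delta n$, so Lemma~\ref{lem:ham_alpha_beta} applies and yields at most $2^{(\alpha H(\delta/\alpha)+(1-\alpha)H(\delta/(1-\alpha))+o(1))n}$ such subsets. This is precisely the point at which the Hamiltonian assumption is used --- it forces the bad subsets to lie in few consecutive blocks along the cycle, making them sparse. (I would also double-check that the hypothesis $\delta<\alpha/2$ of Lemma~\ref{lem:ham_alpha_beta} is in force for the intended regime of parameters.)

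Finally, I would apply Fact~\ref{fact:binom_estimate} to estimate the denominator as $\binom{\bar{\rho} n}{\alpha n}=2^{(H(\alpha/\bar{\rho})+o(1))\bar{\rho} n}$, substitute into the union bound, and collect the exponents into the claimed form. I expect the main ``obstacle'' here to be pure entropy bookkeeping rather than any new idea: the combinatorial substance (the Hamiltonian counting argument) is already encapsulated in Lemma~\ref{lem:ham_alpha_beta}, and the probabilistic content is the single symmetry observation used in the first step.
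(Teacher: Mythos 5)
Your high-level plan is the same as the paper's: condition on the identity of $L_\pi(S)$ (which by symmetry is a uniformly random $\alpha n$-subset of $S$), union-bound over bad subsets $T$, and divide by $\binom{\bar\rho n}{\alpha n}$, which you estimate via Fact~\ref{fact:binom_estimate}. The gap is in the numerator bound. You count bad subsets of $V$ of size $\alpha n$ by applying Lemma~\ref{lem:ham_alpha_beta} globally, getting $2^{(\alpha H(\delta/\alpha)+(1-\alpha)H(\delta/(1-\alpha))+o(1))\,n}$ -- a factor of $n$ in the exponent. But the union bound only ranges over $T\subset S$, and the paper exploits this: it argues that the block-counting proof of Lemma~\ref{lem:ham_alpha_beta}, carried out within the $\bar\rho n$ vertices of $S$ rather than over all of $V$, gives $2^{(\alpha H(\delta/\alpha)+(1-\alpha)H(\delta/(1-\alpha))+o(1))\,\bar\rho n}$ -- with $\bar\rho n$ in the exponent. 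Since $\bar\rho<1$, that restriction saves an exponential factor, and the savings is exactly the discrepancy between your bound and the claim: you get exponent $\bigl(\alpha H(\delta/\alpha)+(1-\alpha)H(\delta/(1-\alpha))\bigr)n - H(\alpha/\bar\rho)\bar\rho n$, while the lemma asserts $\bigl(\alpha H(\delta/\alpha)+(1-\alpha)H(\delta/(1-\alpha))-H(\alpha/\bar\rho)\bigr)\bar\rho n$.

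This is not a bookkeeping detail. With the parameter choices used later in the proof of Theorem~\ref{thm:random-hamiltonian} ($\alpha=0.245$, $\delta=0.1225$, $\bar\rho\approx 0.4988$, so $\alpha H(\delta/\alpha)+(1-\alpha)H(\delta/(1-\alpha))\approx 0.73$ and $H(\alpha/\bar\rho)\bar\rho\approx 0.50$), your exponent is positive, making the bound vacuous, whereas the paper's exponent is negative. So the missing idea is to run the block-counting argument only over subsets of $S$ -- observing that blocks of $T\subset S$ along the Hamiltonian cycle are constrained to lie inside $S$ -- rather than over all $\alpha n$-subsets of $V$. (Your observation that the hypothesis of Lemma~\ref{lem:ham_alpha_beta} needs checking is fair; with the paper's parameters $\delta$ actually equals $\alpha/2$ rather than being strictly smaller, a boundary technicality worth noting, but separate from the main gap.)
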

\begin{proof}
	Consider a set $S$ of size $\bar{\rho} n$, and the first $\alpha n$ vertices in $S$ in a random permutation. This set is just a random subset of $S$ of size $\alpha n$. The number of choices of such subset is $$\binom{\bar{\rho}n}{\alpha n} = 2^{(H(\frac{\alpha}{\bar{\rho}})+o(1))\bar{\rho} n}.$$
	
	Notice that we can apply Lemma~\ref{lem:ham_alpha_beta} for with set $S$, even though $S$ is just a subset of $V$, because the same proof applies only with respect to a subset of vertices in one side of a Hamiltonian graph. Therefore, the number of subsets of size $\alpha n$ of $S$ with at most $\beta n$ neighbors is at most
	\begin{eqnarray*}
		2^{(\alpha H(\frac{\delta}{ \alpha})+(1-\alpha)H(\frac{\delta} {(1-\alpha)})+o(1))\bar{\rho}n}.
	\end{eqnarray*}
	Combining the above, we get that the probability that a random set of $\alpha n$ vertices of $S$ have at most $\beta n$ neighbors is at most
	\begin{eqnarray*}
	2^{(-H(\frac{\alpha}{\bar{\rho}})+\alpha H(\frac{\delta}{ \alpha})+(1-\alpha)H(\frac{\delta} {(1-\alpha)})+o(1))\bar{\rho}n}.
	\end{eqnarray*}
\end{proof}

Now that we have established the three lemmas we are ready to prove Theorem~\ref{thm:random-hamiltonian}.
	
\begin{proof}[Proof of Theorem~\ref{thm:random-hamiltonian}]
	Setting $\epsilon=0.0012$, $\alpha=0.245$ and $\beta=0.3675$ (and $\rho=\frac{1}{2}+\epsilon$, $\bar{\rho}=1-\rho$), we get that these parameters satisfy the conditions for Lemmas \ref{lem:order}, \ref{lem:ham_expansion} and \ref{lem:ham_bad_sets}.
	
	Applying Lemma \ref{lem:ham_bad_sets}, we get that the size of $B_\epsilon$ is at most $n_B\leq 2^{0.044n}$. Applying Lemma~\ref{lem:ham_expansion}, we get that the probability that the lowest $\alpha n$ vertices of a set of size $\bar{\rho}n$ have less than $\beta n$ neighbors is at most $p\leq 2^{-0.86n}$. Applying Lemma~\ref{lem:order}, we get that the probability that for a set $S$ of size $\bar{\rho}n$ the $\alpha n$th vertex in a random $\pi$ comes after $\beta n$ vertices of $V-S$ is at most $q\leq 2^{-0.45n}$. Combining these three, we get that the probability there exists a set of size $\bar{\rho}n$ unmatched by a random $\pi$ is at most $n_B(p+q)<1$, therefore, there must be a $\pi$ that matches at least one vertex in each set of size  $\bar{\rho}n$, and the proof follows.
\end{proof}

This proof approach can be also used to show that a random permutation guarantees to match more than a half of the vertices in every regular graph.
On the other hand, Theorem~\ref{thm:3-4} in Section~\ref{sec:additional} shows that one cannot hope to get $\rho > 3/4$ with a random permutation in regular graphs.

\begin{figure}[h!]
\begin{center}
	\includegraphics[scale=.4]{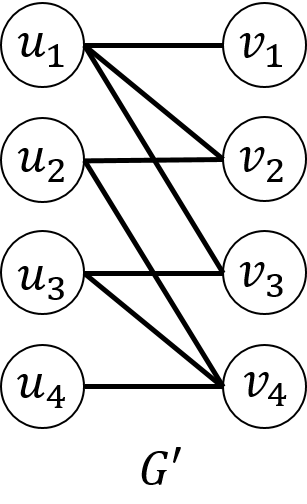}
	\caption{The graph $G'$ admits two bad sets, namely $\{v_1,v_2\}$ and $\{v_1,v_3\}$.}
\end{center}
\label{fig:many-bad}
\end{figure}

The following proposition gives an example of a bipartite graph that admits many bad sets.

\begin{proposition}
There exists a bipartite graph $G(U,V;E)$ with $n$ vertices on each side, which admits $2^{n/4}$ bad sets.
\label{ex:many-bad}
\end{proposition}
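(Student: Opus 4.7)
The plan is to build $G$ as a disjoint union of many copies of a tiny ``gadget'' that already possesses a non-trivial bad set, and exploit the fact that distinct copies do not interact under the greedy process to obtain exponentially many bad sets.

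First I would define the gadget $G_0 = (U_0, V_0; E_0)$ on two vertices per side, with $U_0 = \{u_1, u_2\}$, $V_0 = \{v_1, v_2\}$, and edges $\{u_1 v_1,\, u_1 v_2,\, u_2 v_2\}$. Note that $G_0$ has the perfect matching $\{u_1 v_1, u_2 v_2\}$, so it satisfies the hypothesis. The singleton $\{v_1\}$ is a bad set of $G_0$: choosing $\pi^{\star} = (v_2, v_1)$ and $\sigma^{\star} = (u_1, u_2)$, the greedy matches $u_1$ to $v_2$ (its lower-ranked neighbor under $\pi^{\star}$) and then $u_2$'s unique neighbor $v_2$ is already taken, so $u_2$ is unmatched and so is $v_1$.

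Next, assuming $n$ is even (otherwise adjoin a single extra pendant edge and adjust constants), take $G$ to be the disjoint union of $n/2$ copies of $G_0$, with the $i$-th copy $G_0^{(i)}$ having vertices $u_j^{(i)}, v_j^{(i)}$ for $j \in \{1,2\}$. Then $G$ has $n$ vertices per side and admits a perfect matching (the union of the copies' matchings). For each $I \subseteq \{1, \ldots, n/2\}$ set $S_I = \{v_1^{(i)} : i \in I\} \subseteq V$, yielding $2^{n/2}$ distinct subsets of $V$. I would then show each $S_I$ is bad in $G$ by picking $\pi$ to be the concatenation $(v_2^{(1)}, v_1^{(1)}, v_2^{(2)}, v_1^{(2)}, \ldots, v_2^{(n/2)}, v_1^{(n/2)})$ and $\sigma$ to be the concatenation $(u_1^{(1)}, u_2^{(1)}, u_1^{(2)}, u_2^{(2)}, \ldots)$. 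Because the copies are vertex-disjoint, when $u_j^{(i)}$ arrives its entire neighborhood lies inside $V^{(i)}$, and the relative $\pi$-order on $V^{(i)}$ is exactly $\pi^{\star}$; hence the greedy step within each copy mirrors the stand-alone greedy on $G_0$ under $(\pi^{\star}, \sigma^{\star})$, leaving $v_1^{(i)}$ unmatched for every $i$. In particular every vertex of $S_I$ is unmatched, so $S_I$ is a bad set of $G$. Since distinct $I$'s give distinct $S_I$'s, we obtain at least $2^{n/2} \geq 2^{n/4}$ bad sets of $G$.

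The only delicate step is the non-interference of distinct copies, which follows immediately from vertex-disjointness: every edge of $G$ has both endpoints inside a single copy, so greedy decisions within one copy have no effect on another copy. There is therefore no substantive obstacle. If one additionally insists that all the bad sets have the same prescribed size, one can instead restrict to $|I| = n/4$, which gives $\binom{n/2}{n/4} \geq 2^{n/4}$ distinct bad sets, each of size exactly $n/4$; one could also replace $G_0$ with a larger gadget (for instance a $6$-cycle bipartite graph with three singleton bad sets, or the 4- or 5-vertex gadget suggested by Figure~\ref{fig:many-bad}) to obtain bad sets of different shapes, but the simple two-vertex gadget above already suffices.
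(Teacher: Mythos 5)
Your construction is correct and follows the same disjoint-union strategy as the paper's proof, just with a different gadget. The paper takes a $4$-vertex-per-side gadget $G'$ with \emph{two} bad sets of size $2$, each certified by its own permutation $\pi$ over the four vertices; taking $n/4$ disjoint copies and choosing one of the two bad sets per copy gives $2^{n/4}$ bad sets, all of the fixed cardinality $n/2$, each with its own witnessing $\pi$ built by concatenating the appropriate local permutation in each copy. You instead use a $2$-vertex-per-side gadget with a single bad singleton $\{v_1\}$, take $n/2$ copies, and observe that one fixed pair $(\pi,\sigma)$ leaves all of $v_1^{(1)},\ldots,v_1^{(n/2)}$ unmatched simultaneously, so every subset of this set is bad; this gives $2^{n/2}$ bad sets (more than needed) of varying cardinalities, certified by a single permutation. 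That relies on the (correct but implicit) observation that any subset of a set left unmatched by some $(\pi,\sigma)$ is itself bad. The one thing to be careful about is that in the paper's framework $B_\epsilon$ counts bad sets of one fixed cardinality $(\tfrac12-\epsilon)n$ — that is what the Few-Bad-Sets Lemma bounds — so your closing remark restricting to a single $|I|$ is the right adjustment; $|I|=n/4$ gives $\binom{n/2}{n/4}\ge 2^{n/4}$ bad sets of a fixed size, matching the claim. The paper's choice of gadget merely has the cosmetic advantage that the bad sets automatically all share size $n/2$ without any such restriction.
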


\begin{proof}
Consider the bipartite graph $G'$ (with four vertices on each side) depicted in Figure~\ref{fig:many-bad}.
The sets $V_1=\{v_1,v_3\}$ and $V_2=\{v_1,v_2\}$ are bad sets.
Indeed, under $\pi=\{v_4,v_3,v_2,v_1\}$, the set $\{v_1,v_2\}$ remains unmatched if $u_1$ and $u_2$ arrive before $u_3$ and $u_4$.
Similarly, under $\pi=\{v_4,v_2,v_3,v_1\}$, the set $\{v_1,v_3\}$ remains unmatched if $u_1$ and $u_2$ arrive before $u_3$ and $u_4$.
Now, consider the graph $G$ that is composed of $n/4$ disjoint copies of $G'$.
Since each copy admits two bad sets, $G$ admits $2^{n/4}$ bad sets.
\end{proof}

\section{Iterative Process}
\label{sec:iterative}

A natural approach for establishing the existence of a good permutation $\pi$ for the max min greedy matching problem is the following iterative process of ``upgrading" unmatched vertices.

Given a permutation $\pi:V\rightarrow [n]$ and a permutation $\sigma: U\rightarrow [n]$, let $M[\pi,\sigma]$ be the result of the greedy matching where vertices in $U$ arrive in order $\sigma$ (from low to high) and each vertex $u\in U$ is matched to its lowest (under $\pi$) neighbor (or left unmatched if all its neighbors are already matched).

\begin{figure}[h!]
\begin{center}
	\includegraphics[scale=.4]{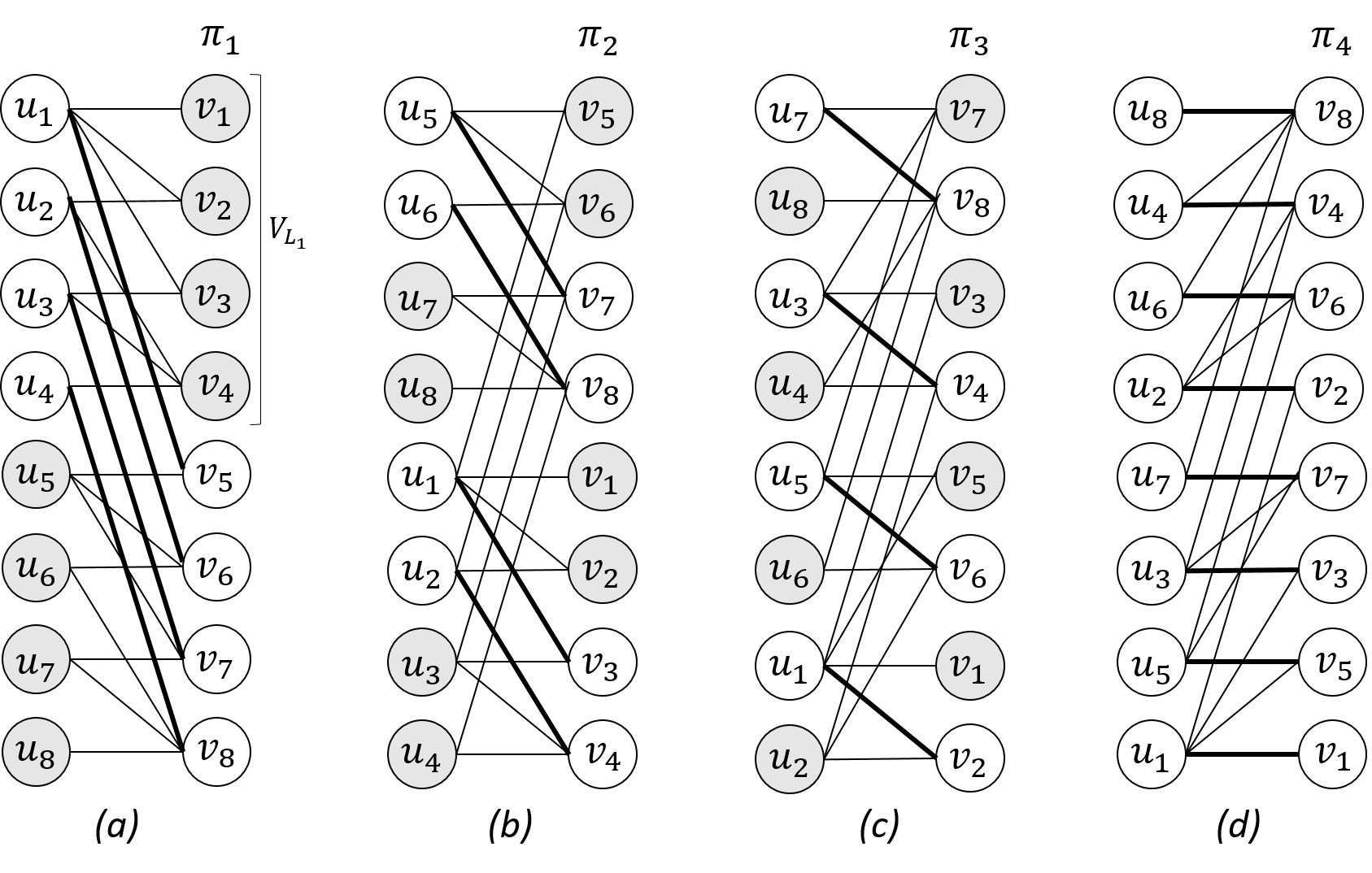}
	\caption{An iterative process where unmatched vertices are given priority. In every iteration thick edges are in the matching; gray vertices are unmatched.}
\end{center}
\label{fig:iterative-log}
\end{figure} 

Fix an arbitrary permutation $\pi_1$ on $V$, and let $\sigma_1$ be a permutation on $U$ minimizing the greedy matching\footnote{It is unclear whether $\sigma_1$ can be computed in polynomial time. The related problem of computing a minimum maximal matching in bipartite graphs is known to be NP-hard \cite{Demange2008}. However, here we consider the existential problem.}.
Let $M_1 = M[\pi,\sigma]$ be the result of the greedy matching under permutations $\sigma$ and $\pi$.
If $|M_1|/n$ is some constant greater than $1/2$, then terminate with permutation $\pi_1$.
Otherwise, partition $V$ into the set $V_{L_1}$ of unmatched vertices ($L$ for {\em low}, as they will be placed low in the next iteration, and also for {\em losers}, or {\em leftovers}) and the set $V_{H_1}$ of matched vertices ($H$ for {\em high}, as they will be placed high in the next iteration, and also for {\em hitters}, or {\em happy}).

Consider now a permutation $\pi_2$ in which $V_{L_1}$ precedes $V_{H_1}$ (preserving the internal order between vertices in $V_{L_1}$ and similarly between vertices in $V_{H_1}$), and let $\sigma_2$ be a permutation on $U$ minimizing the resulting greedy matching.
Let $M_2=M[\pi_2,\sigma_2]$.
If $|M_2|/n$ is some constant greater than $1/2$, then terminate with permutation $\pi_2$.
Else, partition $V$ into the set $V_{L_2}$ of unmatched vertices and the set $V_{H_2}$ of matched vertices, and consider a permutation $\pi_3$ in which $V_{L_2}$ precedes $V_{H_2}$ (preserving internal orders).

Continue this iterative process until the obtained permutation $\pi_k$ ensures a matching greater than a half.

The intuition behind this approach is that the unmatched vertices need some ``help" in order to be matched, and we provide this help in the form of prioritizing them over their mates.
One might hope that this process will reach a good permutation within a constant number of iterations.
Unfortunately, we show an example where the process goes through $\log n$ iterations before it first obtains a permutation ensuring a matching that exceeds $n/2$.

The construction of the graph is inductive.
The base is $G_0(U_0,V_0;E_0)$, with two vertices $u,v$ and a single edge between them.
For every $i=1,2,\ldots$, $G_i(U_i,V_i;E_i)$ is such that $|U_i|=|V_i|=2^i$; it is obtained by taking two (disjoint) copies of $G_{i-1}$, with additional edges of the form $(u_j,v_j)$ for every $u_j$ from one copy of $G_{i-1}$ to $v_j$ in the second copy of $G_{i-1}$.
An example of $G_3$ is presented in Figure \ref{fig:iterative-log}(a).
The iterative process is depicted in Figure \ref{fig:iterative-log}(a)-(d).
In all iterations preceding the last one, exactly $n/2$ vertices are matched in the worst $\sigma$.

\section{Additional Results}
\label{sec:additional}

The following theorem shows that one cannot hope to get $\rho > 3/4$ with a random permutation in regular graphs.

\begin{thm}
\label{thm:3-4}
For every $\epsilon > 0$ and sufficiently large $d$, there are graphs $G$ for which a random permutation $\pi$ results in $\rho \le \frac{3}{4} + \epsilon$.
\end{thm}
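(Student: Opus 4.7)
The plan is to construct, for every even $d$ sufficiently large, a $d$-regular bipartite graph $G_d$ on which the worst-case $\sigma$ achieves matching size at most $\tfrac{3}{4} n$ for every permutation $\pi$; this would yield the theorem deterministically (in fact slightly stronger than stated). The construction I have in mind is a \emph{blow-up of the $8$-cycle}: partition each side into four equal blocks $V = V_1 \cup V_2 \cup V_3 \cup V_4$ and $U = U_1 \cup U_2 \cup U_3 \cup U_4$ of size $d/2$ each, and place a complete bipartite graph between $U_i$ and $V_i$, as well as between $U_i$ and $V_{i-1 \bmod 4}$, for each $i$. The resulting graph is $d$-regular, has $2d$ vertices per side, and contains a perfect matching. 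For arbitrary $n$, I would take $\lfloor n/(2d) \rfloor$ vertex-disjoint copies of $G_d$; since edges do not cross copies, greedy matchings decompose across components and the bound scales linearly, so it suffices to analyze a single copy.

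For a single $G_d$, the plan is to lift the argument for the plain $8$-cycle $C_8$ (where every $\pi$ admits a $\sigma$ leaving exactly one vertex unmatched on each side, so $|M|=3$ out of $4$). For each $i^* \in \{1,2,3,4\}$, define the adversary strategy $\sigma_{i^*}$ that sends the $U$-blocks in the cyclic order $U_{i^*},\,U_{i^*+1},\,U_{i^*-1},\,U_{i^*+2}$, with vertices inside each block ordered in increasing $\pi$-rank of an appropriate designated neighbor. The intended cascade is: the arrivals of $U_{i^*}$ push matches into $V_{i^*-1}$, then $U_{i^*+1}$ pushes into $V_{i^*+1}$, then $U_{i^*-1}$ pushes into $V_{i^*-2}=V_{i^*+2}$; by the time $U_{i^*+2}$ arrives, its only neighbors (which lie in $V_{i^*+1} \cup V_{i^*+2}$) are exhausted, so $U_{i^*+2}$ remains entirely unmatched, giving $|M| \le 2d - d/2 = \tfrac{3d}{2}$.

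The key combinatorial claim I would establish is that for every $\pi$ there exists at least one $i^* \in \{1,2,3,4\}$ for which this cascade goes through completely. I would approach this via a charging argument that exploits the $4$-fold rotational symmetry of $G_d$: for each rotation $i^*$, track the ``leakage'' $x_{i^*}$ quantifying how many $U_{i^*}$-vertices match into $V_{i^*}$ instead of $V_{i^*-1}$ (and analogously for the subsequent cascade steps), and show that $\sum_{i^*=1}^4 x_{i^*}$ is bounded in a way that forces some rotation to have zero net leakage throughout. The main obstacle is precisely this leakage analysis: unlike the plain $8$-cycle, the split of $U_{i^*}$'s matches between $V_{i^*-1}$ and $V_{i^*}$ in the blow-up depends on the interleaving of $\pi$ on these two blocks, so the cascade is not automatic. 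The crux will be to choose $i^*$ adaptively from a rank statistic of $\pi$ (for instance, the block whose minimum $\pi$-position is largest, possibly refined by the full position profile to break ties), and argue that this choice makes the cascade robust to the residual imbalance; small supporting examples (such as the $d=2$ case, i.e.\ disjoint $8$-cycles, and low-$d$ blow-ups verified by hand) suggest the claim holds with equality $|M|=\tfrac{3d}{2}$ for every $\pi$.
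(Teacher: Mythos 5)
Your plan aims to prove something strictly stronger than the theorem: a fixed regular graph on which \emph{every} permutation $\pi$ can be driven down to $\tfrac{3}{4}n$. Note that this would already improve the paper's own Theorem~3, which for its explicit regular family only establishes $\rho(G_d)\le 8/9$; that gap alone should give you pause. And indeed the key lemma you defer---that for every $\pi$ some rotation of the cascade goes through---fails for the blown-up $8$-cycle. Take the round-robin permutation $\pi(v_{a,j})=4(j-1)+a$ for $a\in\{1,\dots,4\}$, $j\in\{1,\dots,k\}$, where $k=d/2$ and $n=4k$. If $v_{4,j}$ is unmatched under $(\pi,\sigma)$, then every $u\in N(v_{4,j})=U_4\cup U_1$ (size $2k$) must have been greedily matched to a neighbor of $\pi$-rank below $4j$; these candidates lie in $N(U_4)\cup N(U_1)=V_3\cup V_4\cup V_1$, and the number of them with rank $<4j$ is $j+(j-1)+j=3j-1$. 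Feasibility of such a matching forces $j\ge(2k+1)/3$, so at most roughly $k/3$ vertices of $V_4$ can ever be unmatched, and the same count holds for each block. Since the unmatched sides of $V$ and $U$ must form a bipartite independent set, the unmatched $V$-vertices are confined to two adjacent blocks, so in total at most roughly $2k/3<k$ vertices of $V$ are unmatched. Hence for this single $\pi$, every $\sigma$ yields $|M_G[\sigma,\pi]|\ge\tfrac{5}{6}n-O(1)>\tfrac34 n$. The ``leakage'' you flag as the main obstacle is thus not a technicality to be absorbed by a charging argument over $i^*$: under an interleaved $\pi$ it is symmetric across all four rotations and genuinely defeats the cascade, no matter which rank statistic you use to pick $i^*$.

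The paper's actual proof is probabilistic and rests on a different graph, and it uses the randomness of $\pi$ in an essential way. It takes a random $d$-regular bipartite graph conditioned on containing a balanced bipartite independent set $(S,T)$ with $|S|=|T|=\tfrac{1-\epsilon}{2}n$. For a \emph{uniformly random} $\pi$, about $\bigl(\tfrac{1-\epsilon}{2}\bigr)^2 n\approx n/4$ vertices of $T$ land outside the $\pi$-prefix $Q$ of length $\tfrac{1+\epsilon}{2}n=|U\setminus S|$; since the unconstrained part of $G$ is random (hence an expander), w.h.p.\ there is a perfect matching between $Q$ and $U\setminus S$, and the adversary's $\sigma$ realizes it, leaving $T\setminus Q$ unmatched. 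For that same graph there certainly exist good permutations (e.g., placing all of $T$ at the start of $\pi$), so the argument is inherently about a random $\pi$ and cannot be upgraded to the deterministic statement you are after---which is consistent with the fact that no such deterministic statement is known below $8/9$ for regular graphs.
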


\begin{proof}
Consider a $d$-regular bipartite graph $G(U,V;E)$, where $d$ is very large, there is a balanced bipartite independent set $(S,T)$ of size $\frac{1 - \epsilon}{2}n$, and conditioned on that, $G$ is random. Let $Q$ (a random variable) be the set of first $\frac{1 + \epsilon}{2}n$ vertices under the random permutation $\pi$. Then, $E[|T \cap (V \setminus Q)|] =  (\frac{1 - \epsilon}{2})^2 n \simeq \frac {1}{4}n$. W.h.p. there will be a perfect matching between $Q$ and $U \setminus S$. Hence one can choose a permutation $\sigma$ over $U$ that matches all of $U \setminus S$ to $Q$. But then the vertices $T \cap (V \setminus Q)$ will remain unmatched.
\end{proof}

We also establish a few impossibility results for regular graphs of low degree.

\begin{thm}
The following hold:
\begin{itemize}
  \item There exists a $3$-regular bipartite graph $G$ for which $\rho(G) = \frac{5}{7}$.
  \item There exists a $4$-regular bipartite graph $G$ for which $\rho(G) = \frac{10}{13}$.
\end{itemize}
\label{thm:imposs-projective}
\end{thm}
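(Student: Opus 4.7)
I propose to realize both graphs as point--line incidence bipartite graphs of finite projective planes. For the $3$-regular case, take $G$ to be the incidence graph of the Fano plane (the projective plane of order $q=2$): $U$ and $V$ are the $n=q^2+q+1 = 7$ lines and points, with $(u,v) \in E$ iff line $u$ contains point $v$; this is a $d=q+1=3$-regular bipartite graph. For the $4$-regular case, take $G$ to be the incidence graph of $\mathrm{PG}(2,3)$, yielding $n=13$ and $d=4$. Both graphs are regular bipartite and hence admit a perfect matching. The target ratios $5/7$ and $10/13$ are precisely $(q^2+1)/(q^2+q+1)$ for $q\in\{2,3\}$, corresponding to leaving exactly $q$ unmatched points out of $n$.

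For the lower bound $\rho(G) \ge (q^2+1)/(q^2+q+1)$, I would prove the stronger statement that \emph{every} maximal matching in $G$ has size at least $q^2+1$, so every $\pi$ witnesses the bound. Suppose a set $S \subseteq V$ of size $q+1$ is left unmatched. Since the matching is maximal, every line meeting $S$ must be matched to a point of $V \setminus S$, which has size $q^2$. By inclusion--exclusion, using that any two distinct points determine a unique line, the number of lines meeting $S$ is at least $(q+1)^2 - \binom{q+1}{2} = \tfrac{(q+1)(q+2)}{2}$, with equality when $S$ is in general position and a strictly larger value whenever $S$ contains collinear triples. For $q\in\{2,3\}$ this quantity strictly exceeds $q^2$, contradicting the requirement that each of those lines be matched to a distinct point outside $S$. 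Hence at most $q$ points are unmatched, giving $|M| \ge q^2+1$ and the desired lower bound.

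For the upper bound $\rho(G) \le (q^2+1)/(q^2+q+1)$, for every $\pi$ I would construct a $\sigma$ that leaves exactly $q$ points unmatched. Let $\{p_1, \ldots, p_q\}$ be the last $q$ points in $\pi$; the adversary targets this set as the unmatched vertices. Because these are top-ranked, any ``inside'' line (a line meeting $\{p_1,\ldots,p_q\}$) strictly prefers its outside neighbors under greedy. The adversary processes inside lines first using a peeling procedure: at each step, send the inside line whose greedy pick coincides with the globally lowest-$\pi$ unmatched outside point. Induction shows this saturates the $q^2$ outside points, so every subsequently arriving ``outside'' line (one avoiding $\{p_1,\ldots,p_q\}$) finds all its neighbors matched and is itself left unmatched. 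The unmatched vertices in $V$ are then exactly $\{p_1,\ldots,p_q\}$, giving a matching of size $q^2+1$.

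The main obstacle is verifying the peeling procedure, which amounts to checking Hall's condition on the bipartite graph between inside lines and outside points at every stage, together with a case split on the incidence structure of $\{p_1,\ldots,p_q\}$ --- in particular, whether the targets are collinear. For Fano ($q=2$) the inside-vs-outside graph reduces to an $8$-cycle with one pendant edge, and peeling succeeds transparently. For $\mathrm{PG}(2,3)$ with $\{p_1,p_2,p_3\}$ in general position there are only $9$ inside lines against $10$ outside points, so the adversary must also assign exactly one outside line to absorb the surplus outside point while forcing the remaining three outside lines to stay unmatched; the projective-plane axioms (any two lines meeting in a unique point, each outside point having a determined split of inside/outside lines through it) provide enough rigidity to verify Hall's condition in each subcase and to realize the required assignment within a greedy process.
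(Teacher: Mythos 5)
Your proposal uses exactly the same graphs (Fano plane and $\mathrm{PG}(2,3)$ incidence graphs) and the same core mechanism for the upper bound as the paper: target the last $q$ points under $\pi$, argue via Hall's condition that the lines meeting them can be matched into the remaining points, and realize that matching by an adversarial ordering $\sigma$. The paper explicitly verifies Hall's condition (e.g., for Fano, checking $|N(U')| \ge |U'|+2$ for every subset $U'$ of the inside lines), whereas you defer this to ``checking Hall's condition at every stage'' and sketch the inside/outside structure; for Fano the inside-vs-outside graph is an $8$-cycle plus an \emph{isolated} edge (the line through both targeted points is joined only to the third point on that line), not an $8$-cycle with a pendant. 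That said, your peeling argument is exactly the right dynamic realization, and the static Hall check suffices to make it go through at every step.

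Where you genuinely add something: the paper asserts $\rho(G) = 5/7$ and $10/13$ but its proof only exhibits a bad $\sigma$ for each $\pi$, i.e., the upper bound. Your lower-bound argument --- that every maximal matching in the incidence graph of a projective plane of order $q \in \{2,3\}$ has size at least $q^2+1$, because any $q+1$ unmatched points would force at least $(q+1)(q+2)/2 > q^2$ lines to be matched into the $q^2$ remaining points --- is a clean Bonferroni count that closes the gap left implicit in the paper. So the proposal is the same route for the interesting (upper) direction and strictly more complete on the lower-bound side; the main thing still owed in both is the fully explicit Hall verification for $\mathrm{PG}(2,3)$, including the surplus outside line in the non-collinear case.
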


The proof relies on graphs induced by projective planes.
A projective plane consists of a set of lines and a set of points, where (among other properties) every two lines intersect in a single point and every two points are incident to a single line.
A projective plane induces a bipartite graph $G(U,V;E)$, where every vertex $u \in U$ corresponds to a point in the plane, every vertex $v \in V$ corresponds to a line, and there exists an edge between $u$ and $v$ if the point corresponding to $u$ is incident to the line corresponding to $v$.

\begin{proof}
For the first result, we show that $\rho = \frac{5}{7}$ for the bipartite graph induced by the Fano plane.
The Fano plane is a projective plane consisting of 7 points and 7 lines, with 3 points on every line and 3 lines through every point.
Consider the $3$-regular bipartite graph $G(U,V;E)$ induced by the Fano plane.
Let $N(V')$ denote the neighbors of a set $V' \in V$.
For every set $V' \in V$ such that $|V'|=2$, it holds that $|N(V')|=5$.
We show below that for every such $V'$ there exists a perfect matching between $N(V')$ and $V \setminus V'$.
Hence one can choose a permutation $\sigma$ over $U$ whose first $5$ vertices are $N(V')$ that will match the vertices of $V \setminus V'$ one by one. Thereafter, the vertices of $V'$ will remain unmatched.
By Hall's condition, it suffices to show that for every set $U' \subset N(V')$ such that $|U'| \leq 5$ it holds that $|N(U')| \geq |U'|+2$ (so that Hall's condition applies with respect to the set $V \setminus V'$).
Indeed, for every set $U'$ of size 1, $|N(U')|=3$, for every set $U'$ of size $\geq 2$, $|N(U')| \geq 6$, and for every set $U'$ of size $5$, $|N(U')|=7$.
It follows that $\rho(G)=5/7$.

The second result follows a similar argument.
It is known that there exists a projective plane consisting of $13$ points and $13$ lines, with $4$ points on every line and $4$ lines through every point.
We claim that $\rho = \frac{10}{13}$ for the bipartite graph $G(U,V;E)$ induced by this projective plane.
By the properties of a projective plane, for every set $V' \in V$ such that $|V'|=3$, it holds that $|N(V')| \in \{9,10\}$.
We show below that for every such $V'$ there exists a perfect matching between $N(V')$ (and possibly an additional vertex $u$ in case $|N(V')| = 9$) and $V \setminus V'$.
Hence one can choose a permutation $\sigma$ over $U$ whose first $10$ vertices are $N(V')$ (possibly with the additional vertex) that will match the vertices of $V \setminus V'$ one by one. Thereafter, the vertices of $V'$ will remain unmatched.
By Hall's condition, it suffices to show that for every set $U' \subset N(V')$ such that $|U'| \leq 10$ it holds that $|N(U')| \geq |U'|+3$ (so that Hall's condition applies with respect to the set $V \setminus V'$).
Indeed, for every set $U'$ of size 1, $|N(U')|=4$, for every set $U'$ of size $\geq 2$, $|N(U')| \geq 7$, for every set $U'$ of size $\geq 5$, $|N(U')| \geq 11$, and for every set $U'$ of size $\geq 9$, $|N(U')| = 13$,
It follows that $\rho(G)=10/13$.
\end{proof}

\end{document}